\definecolor{orange}{RGB}{255,153,51}
\newcommand{\suptiny}[3]{\ensuremath{^{\hspace{#1 pt}\protect\raisebox{#2 pt}{\tiny{$ #3$}}}}}
\definecolor{darkgreen}{RGB}{50,190,50}
\definecolor{darkblue}{RGB}{0,0,190}
\definecolor{darkred}{RGB}{238,0,0}
\newcommand*{\defeq}{\mathrel{\vcenter{\baselineskip0.48ex \lineskiplimit0pt
			\hbox{\scriptsize.}\hbox{\scriptsize.}}}%
	=}
\definecolor{fuchsia}{rgb}{1.0, 0.0, 1.0}
\def\bra#1{\langle#1\vert}
\def\ket#1{\vert#1\rangle}
\newcommand{\bc}{\begin{center}}
\newcommand{\ec}{\end{center}}
\newcommand{\be}{\begin{equation}}
\newcommand{\ee}{\end{equation}}
\newcommand{\bea}{\begin{eqnarray}}
\newcommand{\eea}{\end{eqnarray}}
\newcommand{\pr}{^{\prime}}
\newtheoremstyle{theoremdd}
{ }
{ }
{\itshape}
{0pt}
{\bfseries}
{. ---}
{ }
{\thmname{#1}\thmnumber{ #2}\thmnote{ (#3)}}
\theoremstyle{theoremdd}
\newtheorem{thm}{Theorem}
\newtheorem{lem}{Lemma}
\begin{document}

\title{Fault-tolerant interface between quantum memories and quantum processors}
\author{Hendrik Poulsen Nautrup}
\email{hendrik.poulsen-nautrup@uibk.ac.at}
\affiliation{Institut f{\"u}r Theoretische Physik, Universit{\"a}t Innsbruck, Technikerstr. 21a, A-6020 Innsbruck, Austria}
\author{Nicolai Friis}
\affiliation{Institute for Quantum Optics and Quantum Information, Austrian Academy of Sciences, Boltzmanngasse 3, 1090 Vienna, Austria}
\affiliation{Institut f{\"u}r Theoretische Physik, Universit{\"a}t Innsbruck, Technikerstr. 21a, A-6020 Innsbruck, Austria}
\author{Hans J. Briegel}
\affiliation{Institut f{\"u}r Theoretische Physik, Universit{\"a}t Innsbruck, Technikerstr. 21a, A-6020 Innsbruck, Austria}
\date{\today}
\maketitle

{\bf Topological error correction codes are promising candidates to protect quantum computations from the deteriorating effects of noise. While some codes provide high noise thresholds suitable for robust quantum memories, others allow straightforward gate implementation needed for data processing. To exploit the particular advantages of different topological codes for fault-tolerant quantum computation, it is necessary to be able to switch between them. Here we propose a practical solution, subsystem lattice surgery, which requires only two-body nearest neighbor interactions in a fixed layout in addition to the indispensable error correction. This method can be used for the fault-tolerant transfer of quantum information between arbitrary topological subsystem codes in two dimensions and beyond. In particular, it can be employed to create a simple interface, a quantum bus, between noise resilient surface code memories and flexible color code processors.}\\

Noise and decoherence can be considered as the major obstacles for large-scale quantum information processing. These problems can be overcome by fault-tolerant quantum computation~\cite{NielsenChuang2000, Preskill1997}, which holds the promise of protecting a quantum computer from decoherence for arbitrarily long times, provided the noise is below a certain threshold. Quantum error correction codes are indispensable for any fault-tolerant quantum computation scheme~\cite{CampbellTerhalVuillot2016}. Among these, stabilizer codes~\cite{GottesmanPhD1997}, building on classical coding theory, admit a particularly compact description. In particular, the subclass of topological stabilizer codes (TSCs)~\cite{BravyiTerhal2009} is promising since TSCs are scalable and permit a local description on regular $D$-dimensional lattices.

Two of the most prominent examples of TSCs in two dimensions are surface codes~\cite{BravyiKitaev1998, FowlerMariantoniMartinisCleland2012} and color codes~\cite{BombinMartinDelgado2006}. While surface codes support adaption to biased noise~\cite{FujiiTokunaga2012} and have better error thresholds than comparable color codes~\cite{WangFowlerHollenberg2011}, they do not support a transversal phase gate. However, gauge color codes were recently shown to support the transversal implementation of a universal set of gates in 3D~\cite{Bombin2015}. It is hence desirable to store quantum information in surface code quantum memories while performing computation on color codes in two (and three) dimensions~\cite{Bombin2016}.

Here we present a protocol that makes such hybrid computational architectures viable. We develop a simple, fault-tolerant conversion scheme between two-dimensional (2D) surface and color codes of arbitrary size. Our flexible algorithm for code switching is based on a formalization of lattice surgery~\cite{HorsmanFowlerDevittVanMeter2012, LandahlRyanAnderson2014} in terms of operator quantum error correction~\cite{Poulin2005} and measurement-based quantum computation~\cite{RaussendorfBriegel2001}. This introduces the notion of subsystem lattice surgery (SLS), a procedure that can be understood as initializing and gauge fixing a single subsystem code. The required operations act locally on the boundaries, preserve the 2D structure, and are generators of a topological code. Since all generators of the resulting code have constant size, errors on any of their components only affect a constant number of qubits, making the protocol inherently fault-tolerant. As we explicitly show, this generalizes the methodology of lattice surgery to any combination of two-dimensional topological subsystem stabilizer codes, with color-to-surface code switching as an example of particular interest. While we restrict ourselves to 2D topological codes for the better part of this paper, we show that the essential ingredients of SLS carry over to higher-dimensional codes. In fact, the procedure works even for non-topological codes at the expense of locality. Therefore, our results represent a significant step towards a fault-tolerant interface between robust quantum memories and versatile quantum processors, independently of which topological codes will ultimately prove to be most effective. The method proposed here hence has the prospect of connecting different components of a future quantum computer in an elegant, practical, and simple fashion.

This paper is organized as follows. We first briefly review the stabilizer formalism followed by a short introduction to topological codes. Then, we present our main results, a protocol for general topological code switching and color-to-surface code conversion as an application.


\section{Framework}


\textit{Stabilizer Formalism.} 
We consider a system comprised of $n$ qubits with Hilbert space $\mathcal{H}=(\mathbb{C}^2)^{\otimes n}$. The group of Pauli operators $\mathcal{P}_{n}$ on $\mathcal{H}$ is generated under multiplication by $n$ independent Pauli operators and the imaginary unit $i$. We write $\mathcal{P}_{n}=\langle i, X_{1}, Z_{1},...,X_{n},Z_{n}\rangle$ where $X_{j}$, $Z_{j}$ are single-qubit Pauli operators acting on qubit $j$. An element $P\in\mathcal{P}_{n}$ has weight $w(P)$ if it acts nontrivially on $w$ qubits. We define the stabilizer group $\mathcal{S}=\langle S_{1},...,S_{s}\rangle$ for $s\leq n$ as an Abelian subgroup of $\mathcal{P}_{n}$ such that the generators $S_{i}$ are independent operators $\forall i=1,...,s$ and $-\mathbbm{1}\notin\mathcal{S}$. $\mathcal{S}$ defines a $2^{k}$-dimensional codespace $\mathcal{C}=\text{span}(\{\ket{\psi}\})$ of codewords $\ket{\psi}\in\mathcal{H}$ through the condition $S\ket{\psi}=\ket{\psi}\:\forall S\in\mathcal{S}$, encoding $k=n-s$ qubits. We denote the normalizer of $\mathcal{S}$ by $N(\mathcal{S})$, which here is the subgroup of $\mathcal{P}_{n}$ that commutes with all elements of $\mathcal{S}$. That is, elements of $N(\mathcal{S})$ map the codespace to itself. We write $\mathcal{L}=N(\mathcal{S})/\mathcal{S}$ for the (quotient) group of logical operators which induces a tensor product structure on $\mathcal{C}$, i.e., $\mathcal{C}=\otimes_{i=1}^{k}\mathcal{H}_{i}$. This construction implies that different logical Pauli operators are distinct, nontrivial classes of operators with an equivalence relation given by multiplication of stabilizers.

The distance of an error correction code is the smallest weight of an error $E\in\mathcal{P}_{n}$ such that $E$ is undetectable. The code can correct any set of errors $\mathcal{E}=\{E_{a}\}_a$ iff $E_{a}E_{b}\notin N(\mathcal{S})-\langle i\rangle\mathcal{S}$ $\forall E_{a},E_{b}\in\mathcal{E}$. A stabilizer code defined through $\mathcal{S}$ has distance $d$ iff $N(\mathcal{S})-\langle i\rangle\mathcal{S}$ contains no elements of weight less than $d$.
Equivalently, any nontrivial element of $\mathcal{L}$ is supported on at least $d$ qubits. By the above error-correction condition, an error with weight at most $(d-1)/2$ can be corrected while an error $E$ with weight $d/2\leq w(E)<d$ can only be detected.
From a slightly different perspective, codewords are degenerate ground states of the Hamiltonian $H=-\sum_{i=1}^{s}S_{i}$. Adopting this viewpoint, correctable errors are local excitations in the eigenspace of $H$ since they anticommute with at least one generator $S_{i}\in\mathcal{S}$, while logical operators map the degenerate ground space of $H$ to itself.

A subsystem structure can be induced on stabilizer codes by considering non-Abelian gauge symmetries~\cite{Poulin2005}. The group of gauge transformations is defined as $\mathcal{G}=\mathcal{L}_\mathrm{G}\times\mathcal{S}\times \langle i\rangle$ where $\mathcal{S}$ is a stabilizer group and $\mathcal{L}_\mathrm{G}$ is the group of operators in $N(\mathcal{S})-\langle i\rangle\mathcal{S}$ that are not in a nontrivial class of $\mathcal{L}=N(\mathcal{S})/\mathcal{G}$. This imposes a subsystem structure on the codespace $\mathcal{C}=\mathcal{H}_{\mathrm{L}}\otimes \mathcal{H}_\mathrm{G}$ where all operators in $\mathcal{G}$ act trivially on the logical subspace $\mathcal{H}_{\mathrm{L}}$~\cite{ZanardiLidarLloyd2004}. While logical operators in $\mathcal{L}$ define logical qubits in $\mathcal{H}_{\mathrm{L}}$, operators in $\mathcal{L}_\mathrm{G}$ define so-called gauge qubits in $\mathcal{H}_\mathrm{G}$. That is, operations in $\mathcal{L}$ and $\mathcal{L}_\mathrm{G}$ both come in pairs of (encoded) Pauli $X$ and $Z$ operators for each logical and gauge qubit, respectively.
We recover the stabilizer formalism if $\mathcal{G}$ is Abelian, i.e., $\mathcal{L}_\mathrm{G}=\emptyset$. As before, a set of errors $\mathcal{E}=\{E_{a}\}_a$ is correctable iff $E_{a}E_{b}\notin N(\mathcal{S})-\mathcal{G}$ $\forall E_{a},E_{b}\in\mathcal{E}$. Errors can again be considered as local excitations in the eigenspace of a Hamiltonian $H=-\sum_{i=1}^{g} G_{i}$ where $g$ is the number of generators $G_{i}\in\mathcal{G}$.

\textit{Topological Stabilizer Codes.}
Stabilizer codes $\mathcal{S}$ are called topological if generators $S_{i}\in\mathcal{S}$ have local support on a $D$-dimensional lattice. Here we focus on two-dimensional rectangular lattices $\Lambda=[1,L]\times [1,L\pr]$ with linear dimensions $L$ and $L\pr$ in the horizontal and vertical direction, respectively, where qubits are located on vertices (not necessarily all are occupied) but our discussion can be easily extended to arbitrary regular lattices. Following Ref.~\cite{BravyiTerhal2009}, we call a generator $S_{i}\in \mathcal{S}$ local if it has support within a square containing at most $r^{2}$ vertices for some constant $r$, called interaction range or diameter. Moreover, we require of a TSC that its distance $d$ can be made arbitrarily large by increasing the lattice size. In other words, generators $S_{i}$ are not only local but also translationally invariant at a suitable scale (cf. Ref.~\cite{Bombin2014}). This definition of TSCs can be extended straightforwardly to topological subsystem stabilizer codes (TSSCs) \cite{Bombin2010} where we impose locality on the generators of $\mathcal{G}$ instead of $\mathcal{S}$. Then, generators of $\mathcal{S}$ are not necessarily local.

\section{Topological Code Switching}

An intriguing feature of 2D topological codes is that logical operators can be interpreted as moving point-like excitations around the lattice. Specifically, the authors of Ref.~\cite{BravyiTerhal2009} prove that for any 2D TSC or TSSC on a lattice with open boundaries there exists a quasi-1D string of tensor products of Pauli operators generating a logical Pauli operator whose support can be covered by a rectangle of size $r\times L\pr$, i.e., a vertical strip of width at most $r$. If $\mathcal{S}$ obeys the locality condition, logical operators can be implemented row-by-row by applying Pauli operations along paths connecting two boundaries such that excitations emerge only at its endpoints. If $\mathcal{S}$ is not local, excitations might also emerge elsewhere since logical operators can anticommute with generators that act nontrivially on gauge qubits. Due to the translational invariance of generators with respect to the underlying lattice, one can always increase the code distance or choose the lattice such that there exists a logical Pauli operator whose support is covered by a vertical strip of maximal width $r$ along the boundary (or at most a constant away from it).

Since it is a general feature of topological codes to support logical string operators~\cite{BravyiTerhal2009}, we can generalize the method of lattice surgery~\cite{HorsmanFowlerDevittVanMeter2012, LandahlRyanAnderson2014} to such codes. To this end, we consider any two TSCs or TSSCs $\mathcal{G}^{\mathrm{A}}$ and $\mathcal{G}^{\mathrm{B}}$ on two 2D lattices, e.g., $\Lambda_{\mathrm{A}}=[1,L_{\mathrm{A}}]^{2}$ and $\Lambda_{\mathrm{B}}=[1,L_{\mathrm{B}}]^{2}$, with open boundaries that have distances $d_{\mathrm{A}}$ and  $d_{\mathrm{B}}$ respectively. We place the lattices such that their vertical boundaries are aligned. Let $P_{\mathrm{L}}^{\mathrm{A}}\in\mathcal{L}^{\mathrm{A}}$ and $P_{\mathrm{L}}^{\mathrm{B}}\in\mathcal{L}^{\mathrm{B}}$ be two logical operators defined along the aligned boundaries with maximal width $r_{\mathrm{A}}$ and $r_{\mathrm{B}}$, respectively, where $r_{I}$ is the interaction range of code $I=\mathrm{A},\mathrm{B}$. The logical operators act nontrivially on a set of qubits, say $Q_{\mathrm{A}}$ and $Q_{\mathrm{B}}$, respectively. Let $N_{I}=|Q_{I}|$ (for $I=\mathrm{A},\mathrm{B}$) and $N=\max\{N_{\mathrm{A}},N_{\mathrm{B}}\}$. W.l.o.g. we assume $N_{\mathrm{A}}=N$. In agreement with Ref.~\cite{BravyiTerhal2009}, we order the sets $Q_{I}$ according to walks along paths on $\Lambda_{I}$ running, e.g., from top to bottom such that we can implement $P_{\mathrm{L}}^{I}$ in a step-by-step fashion as described above. We write $Q_{I}=\{1,2,...,N_{I}\}$ for such an ordering. For ease of notation we denote qubits in the support of $P_{\mathrm{L}}^I$ by $i=1,...,N_{I}$ since the association of qubits to the sets $Q_{I}$ is clear from the context. Consequently, the logical operators take the form
\begin{align}
P^{I}_{\mathrm{L}}   &=\,P^{I}_{\mathrm{L},1}\otimes P^{I}_{\mathrm{L},2}\otimes ...\otimes P^{I}_{\mathrm{L},N_{I}}
\end{align}
for single-qubit Pauli operators $P^I_{\mathrm{L},i}$ acting on qubits $i$ with $i=1,...,N_{I}$.

\subsection{Merging Protocol and Code Splitting}

Given the two codes $\mathcal{G}^{\mathrm{A}}$ and $\mathcal{G}^{\mathrm{B}}$ with respective logical Pauli operators $P_{\mathrm{L}}^{\mathrm{A}}$ and $P_{\mathrm{L}}^{\mathrm{B}}$ along boundaries, the goal is to achieve a mapping $\mathcal{G}^{\mathrm{A}}\times\mathcal{G}^{\mathrm{B}}\mapsto \mathcal{G}^{\mathrm{A}}\times\mathcal{G}^{\mathrm{B}}\times\langle{P}_{\mathrm{L}}^{\mathrm{A}}\otimes {P}_{\mathrm{L}}^{\mathrm{B}}\rangle$ that projects onto an eigenstate of ${P}_{\mathrm{L}}^{\mathrm{A}}\otimes {P}_{\mathrm{L}}^{\mathrm{B}}$ (i.e., a Bell state). Therefore, we now define a fault-tolerant protocol that merges the two codes into one, similar to the method of lattice surgery, which has been developed for surface codes~\cite{HorsmanFowlerDevittVanMeter2012} and extended to color codes~\cite{LandahlRyanAnderson2014}.
The procedure has the following four steps.
\begin{figure*}[ht!]
    \subfigure[]{\includegraphics[width=0.3\textwidth]{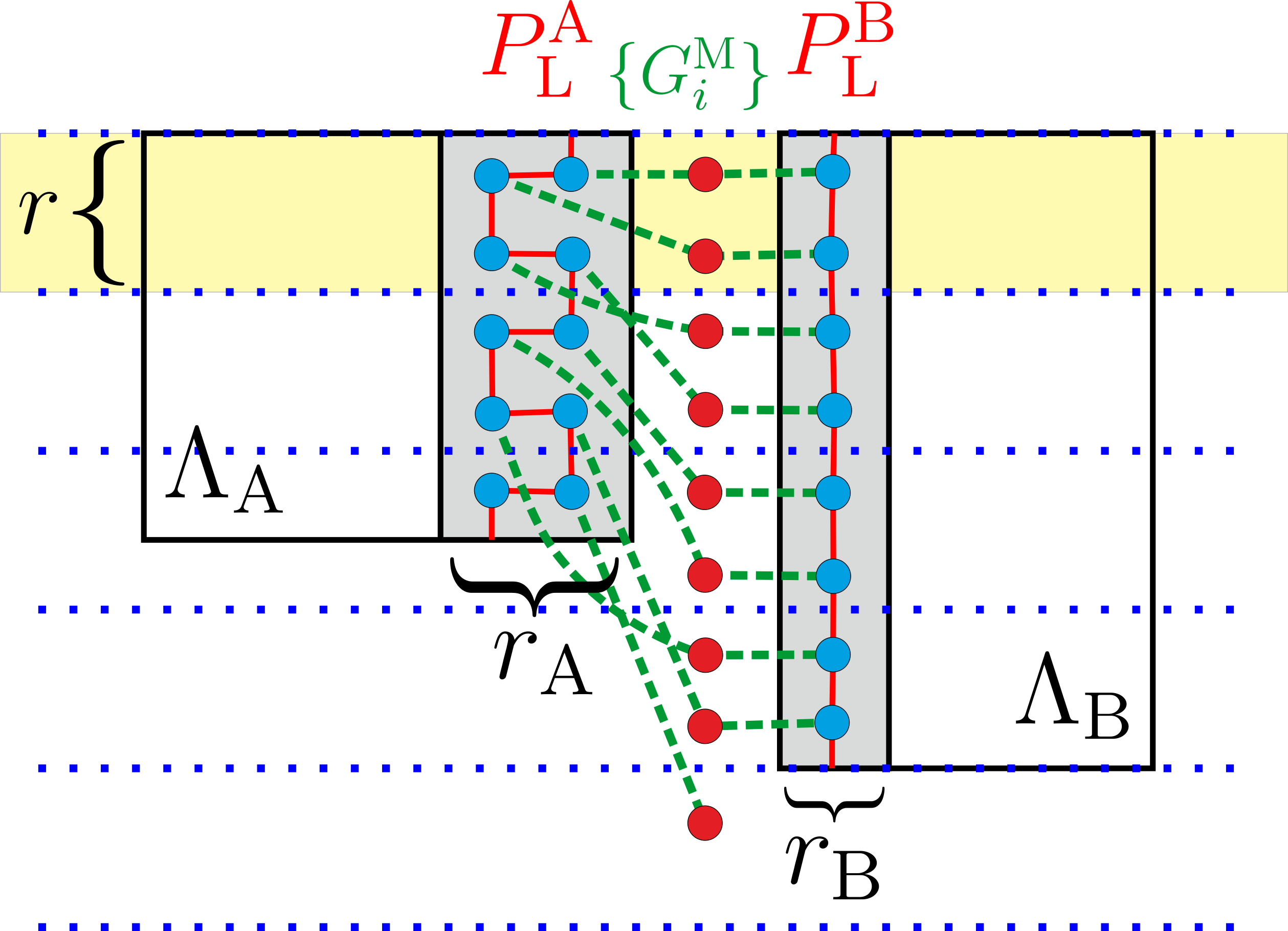}
	\label{fig:1a}}\hspace{0.30cm}
	\subfigure[]{\includegraphics[width=0.3\textwidth]{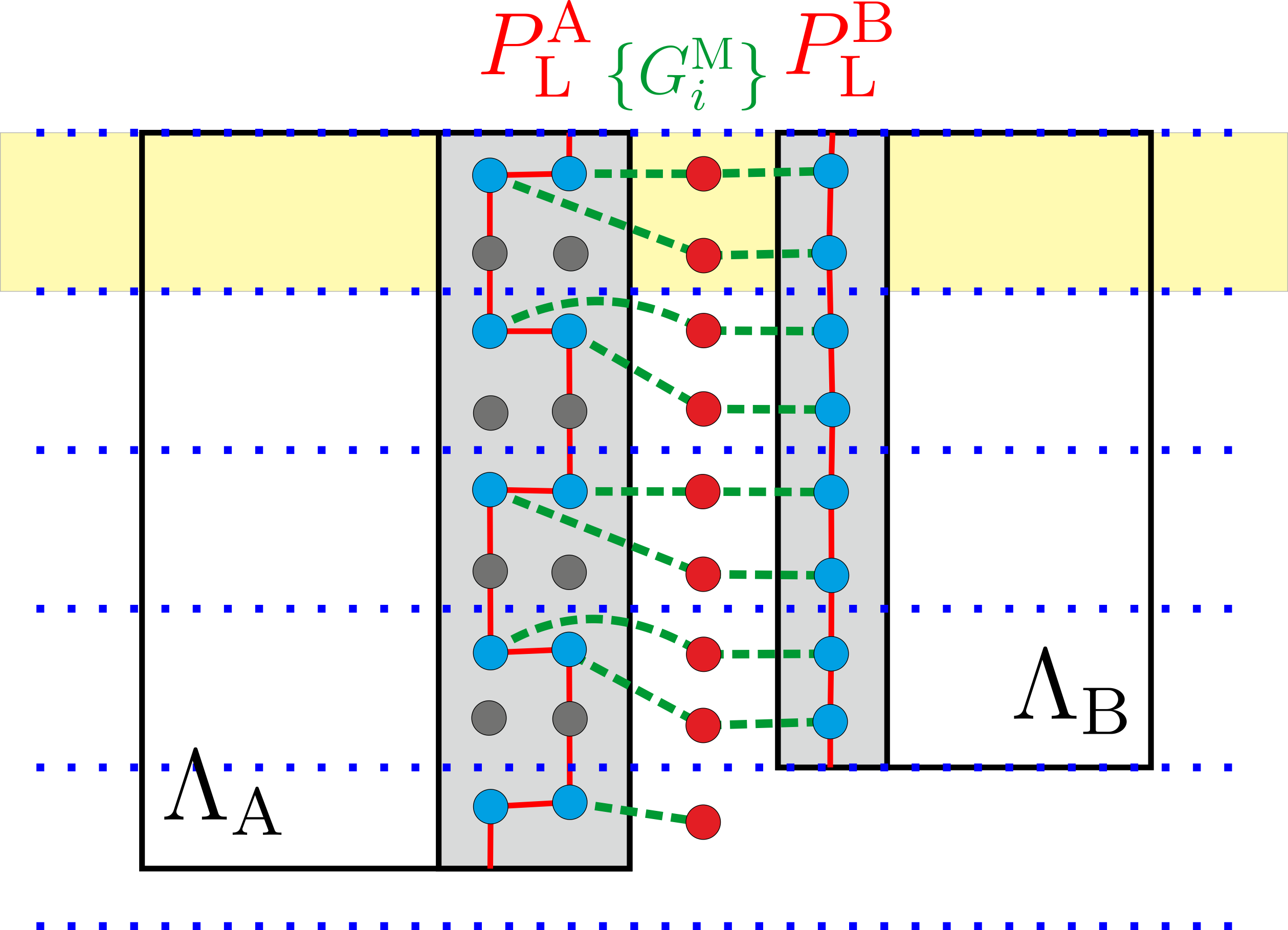}
	\label{fig:1b}}
    \vspace*{-3mm}
\caption{
{\bf Subsystem lattice surgery between two topological codes} (a) Depiction of two topological codes defined on lattices $\Lambda_{\mathrm{A}}$ and $\Lambda_{\mathrm{B}}$ (grid not shown) respectively. Logical Pauli operators $P^{I}_{\mathrm{L}}$ of the respective codes $I=\mathrm{A},\mathrm{B}$ are represented by solid red lines connecting qubits (blue dots) within $\Lambda_{\mathrm{A}}$ and $\Lambda_{\mathrm{B}}$, respectively. These operators have support on a vertical strip (gray areas) of width $r_{I}$ for $I=\mathrm{A},\mathrm{B}$. Qubits in the support of $P^{I}_{\mathrm{L}}$ are associated with a set $Q_I$. Ancillas added in step~(\ref{step i}) of the merging protocol are depicted as red dots and are associated with a set $Q_\mathrm{C}$. Merging operations $G_i^{\mathrm{M}}$, which are defined in Eq.~(\ref{merger}), are indicated (in part) by dashed green lines (with only the $i$th component being covered). The product of all merging operations acts as $P_{\mathrm{L}}^{\mathrm{A}}\otimes P_{\mathrm{L}}^{\mathrm{B}}$ such that a collective measurement projects onto a joint eigenstate of the two logical operators. Here, the diameter of an operator $G_i^{\mathrm{M}}$ can grow with the lattice sizes since $r_{\mathrm{A}}>r_{\mathrm{B}}$.
(b) Adding redundant vertices (grey dots) to the lattices increases the interaction range of the codes such that $Q_{\mathrm{A}}$, $Q_{\mathrm{B}}$ and $Q_\mathrm{C}$ contain the same number of qubits within each horizontal strip of thickness $r$ between adjacent blue, dashed lines (e.g., the yellow area). Then, the diameter of merging operators can be kept constant, i.e. independent of the system size.}
\label{fig:scaling}
\end{figure*}
\begin{enumerate}[(i)]
\item{\label{step i}
Including ancillas. We introduce a set $Q_\mathrm{C}$ of \mbox{$N-1$} ancillas initialized in the $+1$ eigenstates of their respective Pauli $X$ operators. The ancillas are placed on newly added vertices along a column between boundaries [see Fig.~\ref{fig:scaling}~(a)] and we order the set $Q_\mathrm{C}$ accordingly from top to bottom, i.e., $Q_\mathrm{C}=\{1, 2,...,N-1\}$. More formally, including ancillas in $\ket{+}$-states is equivalent to adding $N-1$ single-qubit $X$ stabilizers to $\mathcal{G}^{\mathrm{A}}\times\mathcal{G}^{\mathrm{B}}$. Note that this step is not strictly necessary but it has been included here to highlight the similarity to lattice surgery. Nonetheless, this step is useful if one operates solely on TSCs (rather than TSSCs), since this can allow reducing the interaction range of the resulting merged code (see Appendix~\ref{appendix:sls} for details).
}
\item{\label{step ii}
Preparing lattices. Redundant vertices are added to the lattices $\Lambda_{\mathrm{A}}$ and $\Lambda_{\mathrm{B}}$ without adding corresponding new qubits. This corresponds to a relabelling that increases the interaction range on both lattices and the distance between ancillas by at most a constant $r^{2}$, where $r=\max\{r_{\mathrm{A}},r_{\mathrm{B}}\}$. This is done in such a way that horizontal strips of width (height)~$r$ contain the same number of qubits in $Q_{\mathrm{A}}$, $Q_{\mathrm{B}}$, and $Q_\mathrm{C}$ [see Fig.~\ref{fig:scaling}~(b)]. Beyond the strip containing the last element of $Q_{\mathrm{B}}$ we only require the same number of qubits in $Q_{\mathrm{A}}$ and $Q_\mathrm{C}$, except for the last strip, where $Q_\mathrm{C}$ contains one qubit less than $Q_{\mathrm{A}}$. The lattices are then connected along their vertical boundary such that the resulting merged lattice $\Lambda_{\mathrm{M}}$ has linear dimension $L_{\mathrm{A}}+L_{\mathrm{B}}+1$ along the horizontal direction. This step guarantees that the merged code remains topological according to the definition above. If a different definition of locality is adopted for topological codes, step~(\ref{step ii}) can be replaced by a stretching of the lattices to ensure that the merged code is topological in the same sense as well.
}
\item{\label{step iii}
Merging codes. After combining the underlying lattices, the codes are merged. To this end, one measures $N$ merging operators, which are defined on the new lattice $\Lambda_{\mathrm{M}}$ as
\begin{align}
G^{\mathrm{M}}_{i}   &=\,P_{\mathrm{L},i}^{\mathrm{A}}P_{\mathrm{L},i}^{\mathrm{B}}Z^\mathrm{C}_{i}Z^\mathrm{C}_{i-1}\:\:\:\forall i=1,...,N\,,
\label{merger}
\end{align}
where $Z^\mathrm{C}_{i}$ acts on ancilla $i$ with $Z_{0}\equiv\mathbbm{1}$ and $Z_{N}\equiv\mathbbm{1}$. 
Since $N_{\mathrm{A}}\geq N_{\mathrm{B}}$, we identify $P_{\mathrm{L},i}^{\mathrm{B}}\equiv \mathbbm{1}$ for $i>N_{\mathrm{B}}$.
}
\item{\label{step iv}
Correcting errors. In order to retain full fault tolerance, $d_{\mathrm{min}}=\min\{d_{\mathrm{A}},d_{\mathrm{B}}\}$ rounds of error correction are required on the merged code. For this purpose, the structure of the merged code can be deduced from the SLS formalism that is introduced in Appendix~\ref{appendix:sls} and analyzed in Appendix~\ref{appendix:code}.
}
\end{enumerate}
We then formulate the following theorem.
\begin{thm}\label{theorem}
For any two TSCs (or TSSCs) defined on regular 2D lattices with open boundaries, the procedure described in steps {\normalfont (\ref{step i})-(\ref{step iv})} fault-tolerantly projects onto an eigenstate of $P_{\mathrm{L}}^{\mathrm{A}}P_{\mathrm{L}}^{\mathrm{B}}$. Moreover, all operations in steps~{\normalfont (\ref{step iii})} and~{\normalfont (\ref{step iv})} correspond to generators of a TSSC defined on the merged lattice with potentially increased interaction range and distance no less than the minimum distance of the merged codes.
\end{thm}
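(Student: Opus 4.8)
The plan is to split the statement into three parts: (a) the measurements of step~(\ref{step iii}) form a valid joint measurement that fixes the value of $P_{\mathrm{L}}^{\mathrm{A}}P_{\mathrm{L}}^{\mathrm{B}}$; (b) the whole procedure is fault tolerant; (c) the operators $\mathcal{G}^{\mathrm{A}},\mathcal{G}^{\mathrm{B}},\{X_j^{\mathrm{C}}\},\{G_i^{\mathrm{M}}\}$ generate a TSSC on $\Lambda_{\mathrm{M}}$ of distance at least $d_{\min}$.

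For (a) I would first record the two algebraic identities on which everything rests. Since each $P_{\mathrm{L},i}^{I}$ is supported on a distinct qubit of $Q_I$ and the ancilla factors of all $G_i^{\mathrm{M}}$ are $Z$-type, the $G_1^{\mathrm{M}},\dots,G_N^{\mathrm{M}}$ pairwise commute, so they can be measured simultaneously, yielding outcomes $m_i\in\{\pm1\}$ with $G_i^{\mathrm{M}}\ket{\psi}=m_i\ket{\psi}$ on the post-measurement state $\ket{\psi}$. Second, the ancilla factors telescope: each $Z_k^{\mathrm{C}}$ with $1\le k\le N-1$ appears in exactly $G_k^{\mathrm{M}}$ and $G_{k+1}^{\mathrm{M}}$ and cancels, while $Z_0^{\mathrm{C}}=Z_N^{\mathrm{C}}=\mathbbm{1}$, so $\prod_{i=1}^{N}G_i^{\mathrm{M}}=P_{\mathrm{L}}^{\mathrm{A}}P_{\mathrm{L}}^{\mathrm{B}}$ (using also $P_{\mathrm{L},i}^{\mathrm{B}}\equiv\mathbbm{1}$ for $i>N_{\mathrm{B}}$ and $N_{\mathrm{A}}=N$). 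Hence $P_{\mathrm{L}}^{\mathrm{A}}P_{\mathrm{L}}^{\mathrm{B}}\ket{\psi}=\big(\prod_i m_i\big)\ket{\psi}$, i.e.\ the protocol projects onto an eigenstate of $P_{\mathrm{L}}^{\mathrm{A}}P_{\mathrm{L}}^{\mathrm{B}}$; that this is a nontrivial (Bell) projection follows because $P_{\mathrm{L}}^{I}$ represents a nontrivial class of $\mathcal{L}^{I}$. For (b) I would observe that after step~(\ref{step ii}) the relabelled lattice carries equally many qubits of $Q_{\mathrm{A}}$, $Q_{\mathrm{B}}$, $Q_{\mathrm{C}}$ in every horizontal strip of height $r$, so the support of each $G_i^{\mathrm{M}}$ fits in an $O(r^2)$-vertex region; thus each merging operator has $O(1)$ weight and a fault on any of its components corrupts only $O(1)$ qubits, while measurement and timing faults are absorbed by the $d_{\min}$ rounds of error correction in step~(\ref{step iv}), exactly as in the standard fault-tolerance analysis of lattice surgery (to be cited), the syndrome structure being that of the merged code described in Appendices~\ref{appendix:sls}--\ref{appendix:code}.

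For (c) I would set $\mathcal{G}_{\mathrm{M}}=\langle\mathcal{G}^{\mathrm{A}},\mathcal{G}^{\mathrm{B}},X_1^{\mathrm{C}},\dots,X_{N-1}^{\mathrm{C}},G_1^{\mathrm{M}},\dots,G_N^{\mathrm{M}}\rangle$: its generators are all local on $\Lambda_{\mathrm{M}}$ (those of $\mathcal{G}^{\mathrm{A}},\mathcal{G}^{\mathrm{B}}$ by hypothesis, the $X_j^{\mathrm{C}}$ trivially, the $G_i^{\mathrm{M}}$ by the strip argument) and translationally invariant at scale $r$, so $\mathcal{G}_{\mathrm{M}}$ defines a TSSC on $\Lambda_{\mathrm{M}}$ with interaction range $r+O(1)$, and the operations of steps~(\ref{step iii})--(\ref{step iv}) are precisely (gauge-fixing) measurements of its generators. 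The remaining task is the distance bound. I would argue that any representative $\bar P$ of a nontrivial logical class of $\mathcal{G}_{\mathrm{M}}$ can, after multiplication by elements of $\mathcal{G}_{\mathrm{M}}$, be cleared from the ancilla support, whereupon it restricts on $\Lambda_{\mathrm{A}}$ to an element of $N(\mathcal{S}^{\mathrm{A}})$ and on $\Lambda_{\mathrm{B}}$ to one of $N(\mathcal{S}^{\mathrm{B}})$; at least one of these restrictions must be nontrivial modulo the corresponding original gauge group, since the seam operators $G_i^{\mathrm{M}}$ multiply exactly to the logical strings $P_{\mathrm{L}}^{\mathrm{A}},P_{\mathrm{L}}^{\mathrm{B}}$ and so introduce no new short logical, giving $w(\bar P)\ge\min\{d_{\mathrm{A}},d_{\mathrm{B}}\}$.

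The main obstacle is exactly this last step: because the merge \emph{enlarges} the gauge group, a careless argument would overlook the possibility that combining the $G_i^{\mathrm{M}}$ with a low-weight dressed operator produces a short bare logical of $\mathcal{G}_{\mathrm{M}}$. Ruling this out requires the explicit identification of the center $\mathcal{S}_{\mathrm{M}}=Z(\mathcal{G}_{\mathrm{M}})$ and of the surviving logical and gauge operators of the merged code, which is carried out in Appendices~\ref{appendix:sls}--\ref{appendix:code}; at that point of the proof I would invoke those results rather than reproduce the bookkeeping here.
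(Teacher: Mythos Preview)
Your proposal is correct and follows essentially the same approach as the paper: the telescoping identity for part~(a), the constant-weight argument for fault tolerance in~(b), and the locality-plus-distance analysis for~(c) all mirror the paper's Lemma~\ref{lemma1} in Appendix~\ref{appendix:sls}, and your explicit deferral of the delicate distance step to the appendices is exactly what the paper does (there the ``exotic'' codes for which partial products of the $G_i^{\mathrm{M}}$ could create a short logical are identified and excluded by fiat). The only organizational difference is that you keep the ancilla $X_j^{\mathrm{C}}$ operators in the merged gauge group from the outset, whereas the paper's formal proof first omits the ancillas and reinstates them at the end of Appendix~\ref{appendix:sls}; this is cosmetic and does not affect the argument.
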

For the proof of Theorem~\ref{theorem} we refer to Appendices~\ref{appendix:sls} and~\ref{appendix:code}, where the required SLS formalism is described in full detail. In this approach, the merged code is effectively treated as a subsystem code and the original codes $\mathcal{G}^{\mathrm{A}}$ and $\mathcal{G}^{\mathrm{B}}$ are recovered by gauge fixing. Note that in the merged code, some stabilizers at the boundary are merged while others can generate new gauge qubits. More details on the merged code structure in the most general case and in the case of a specific TSSC can be found in Appendices~\ref{appendix:code} and~\ref{appendix:ssc}, respectively. After the merging procedure described in steps (\ref{step i})-(\ref{step iv}), the two individual codes can be recovered\footnote{Here it is important to note that error correction has to be performed on the separate codes in order to retain full fault tolerance (cf. Ref.~\cite{HorsmanFowlerDevittVanMeter2012})} by measuring all ancillas in the $X$-basis. In this case the merged code is split and the two logical subspaces are left in a joint eigenstate of their respective Pauli operators. Quantum information can then be teleported from one logical subspace to the other by measurement. This fault-tolerant way of transferring quantum states from one arbitrary TSC (or TSSC) to another provides a simple method that allows exploiting the advantages of different error correction codes. In particular, it can be used to realize an interface between a quantum memory (e.g., based on a surface code) and a quantum processor (using, for example, a color code). We will therefore now demonstrate our protocol for this crucial example of two TSCs.

\subsection{Color-To-Surface Code Switching}

The archetypical examples for TSCs are surface codes (SC) and color codes (CC). Surface codes \cite{BravyiKitaev1998} are defined on $2$-face-colorable regular lattices with qubits located on vertices. The two colors are identified with $X$- or $Z$-type generators $S^{X}_{p},S^{Z}_{p}\in \mathcal{S}\suptiny{0}{0}{\mathrm{(SC)}}$, respectively, acting as $S^{P}_{p}=\prod_{v\in\mathcal{N}(p)} P_{v}$, where $\mathcal{N}(p)$ is the set of vertices adjacent to the plaquette $p$ and $P=X,Z$, see Fig.~\ref{fig:CCSC}~(a). Horizontal and vertical boundaries cut through plaquettes and can be associated with their respective colors. The lattice is constructed such that opposite boundaries are identified with the same color. The number of logical qubits can be obtained by a simple counting argument. It is given by the difference between the number of physical qubits and the number of independent stabilizers. For the SC, the former is equal to the number of vertices $v$, while the latter equals the number of faces $f$. Since $v-f=1$, we find that the SC encodes a single logical qubit. Logical operators for SCs are strings connecting boundaries of the same color that are separated by a boundary of a different color. Note that, for a particular choice of SC, one such logical operator is guaranteed to be aligned with the chosen (e.g., the vertical) boundary~\cite{BravyiTerhal2009}, but one may not be able to freely choose whether this logical operator is of $Z$- or $X$-type.
\begin{figure}[hb!]
	%
	(a)\ \includegraphics[width=0.35\textwidth]{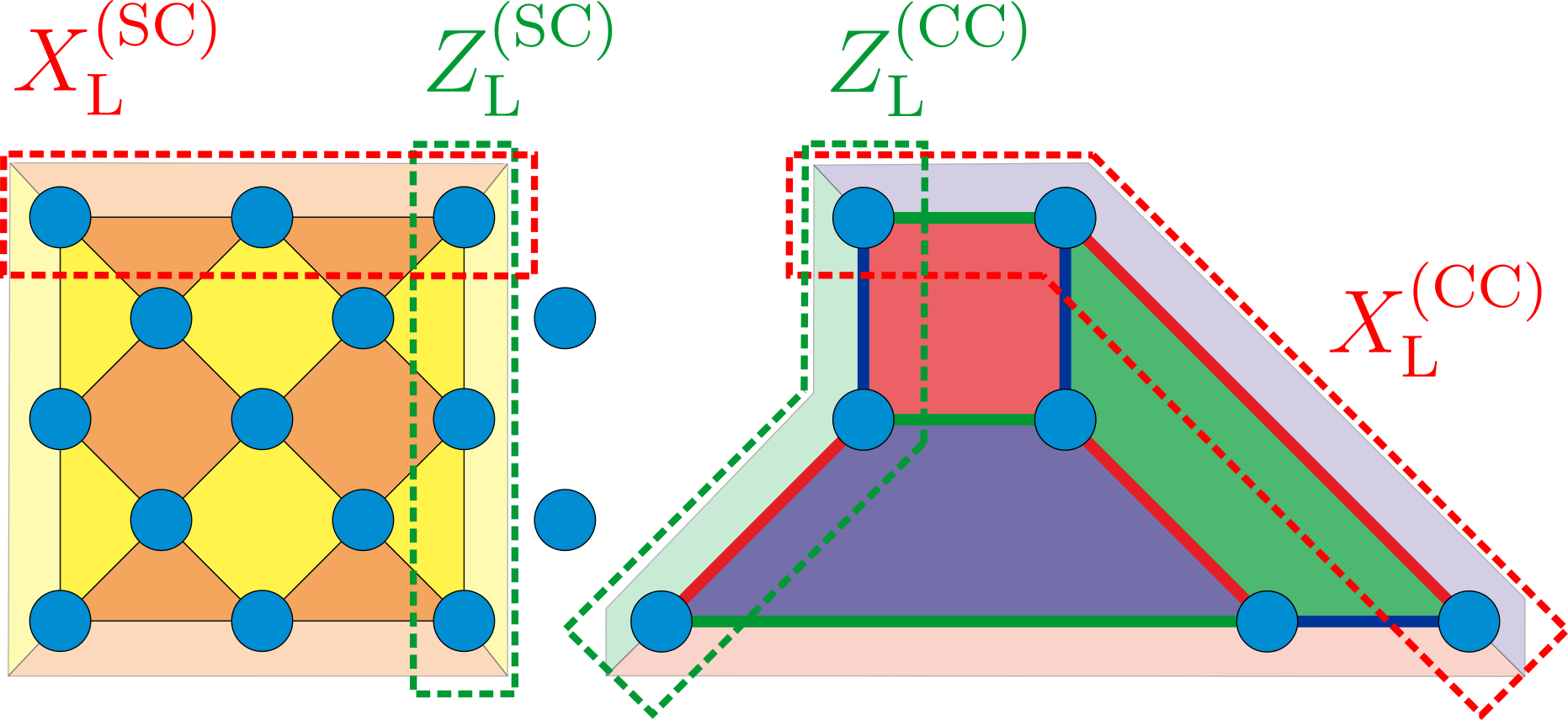}\\[0.4cm]
	\vspace*{-2mm}
	(b)\ \includegraphics[width=0.35\textwidth]{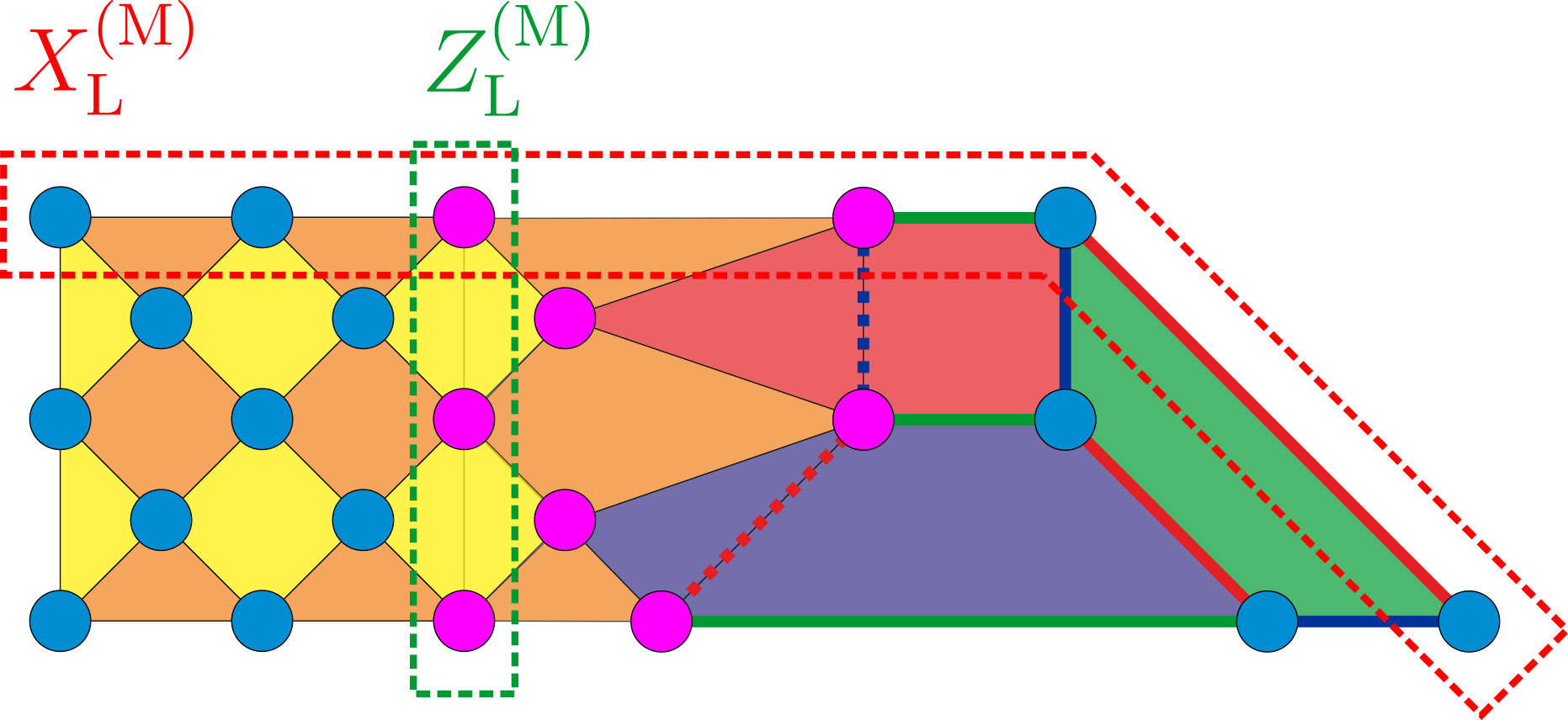}
	\vspace*{-2mm}
	\caption{
    {\bf Lattice surgery between a surface and a color code.} (a) Examples of a surface code $(\mathrm{SC})$ and a color code $(\mathrm{CC})$ with distance $3$ defined on lattices with open boundaries. Different types of boundaries are color coded. Qubits are located on vertices depicted in blue. While plaquettes in the surface code correspond to $X$-type (yellow) and $Z$-type (orange) stabilizers, respectively, each plaquette in the color code corresponds to both $X$- and $Z$-type stabilizers. Representative logical Pauli operators $X_{\mathrm{L}}$ and $Z_{\mathrm{L}}$ are depicted by dashed lines and are tensor products of single-qubit $X$ and $Z$ operators, respectively. (b) Merging stabilizers are (orange) plaquette operators acting as $Z$ on qubits along the boundary and ancillas (all shown in fuchsia). Measuring all merging stabilizers projects onto a joint eigenstate of the two logical operators $Z_{\mathrm{L}}\suptiny{0}{0}{\mathrm{(SC)}}$ and $Z_{\mathrm{L}}\suptiny{0}{0}{\mathrm{(CC)}}$. While $Z$-type stabilizers remain unaffected, the procedure merges $X$-type stabilizers and logical $X$ operators, respectively, as displayed. In particular, the resulting merged code, labeled by $\mathrm{M}$, encodes only one remaining logical qubit defined by $Z_{\mathrm{L}}^{(M)}$ and $X_{\mathrm{L}}^{(M)}$. Note that the geometry of the underlying codes can be chosen to fit the geometry of a specific computational architecture. Note further that there are many different layouts realizing the same error correction code and this figure merely shows a common variant.
}
    \label{fig:CCSC}
\end{figure}
\begin{figure}[ht!]
    \includegraphics[width=0.34\textwidth]{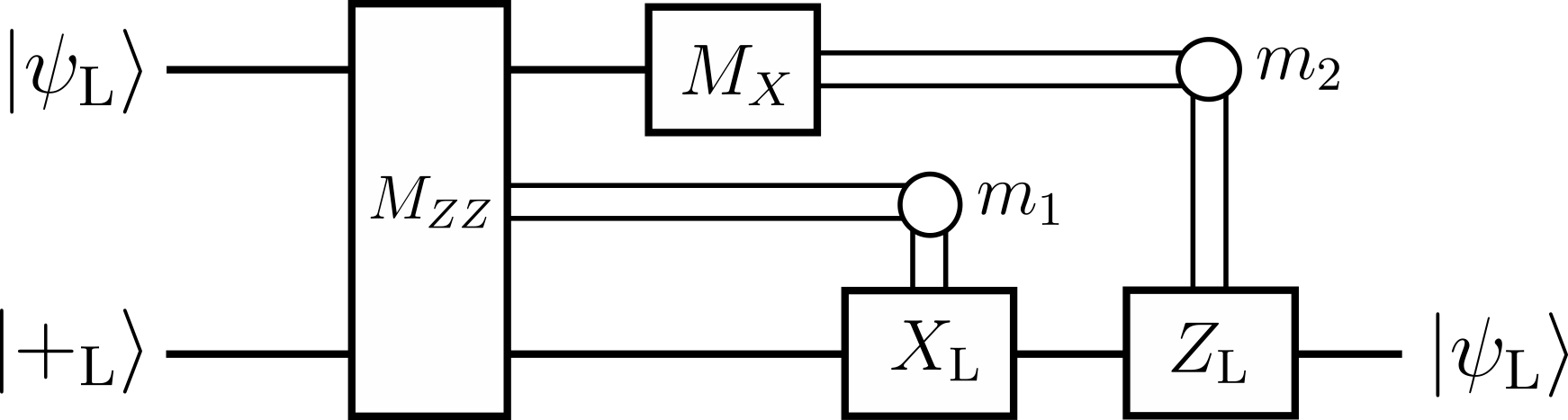}
    \caption{{\bf Circuit for measurement-based information processing implemented via lattice surgery.} The upper and lower single, horizontal lines correspond to the logical qubits encoded in the color and surface codes, respectively. $M_{ZZ}$ and $M_{X}$ are projective measurements that project onto an eigenstate of $Z_{\mathrm{L}}\otimes Z_{\mathrm{L}}$ and $X_{\mathrm{L}}\otimes\mathbbm{1}_{\mathrm{L}}$ respectively. $M_{ZZ}$ is implemented through the merging protocol and code splitting described in the main text. The double lines labelled $m_{1}$ and $m_{2}$ represent classical conditioning of the $X_{\mathrm{L}}$ and $Z_{\mathrm{L}}$ gates on the measurement outcomes $m_{1}$ and $m_{2}$, respectively.}
    \label{fig:4}
\end{figure}
2D color codes \cite{BombinMartinDelgado2006} are defined on $3$-face-colorable regular lattices with qubits located on vertices. The generators of the corresponding stabilizer group $\mathcal{S}\suptiny{0}{0}{\mathrm{(CC)}}$ are pairs of operators $S^{X}_{p}$ and $S^{Z}_{p}$, i.e., two independent stabilizers per plaquette. The lattice is constructed such that we can use three colors to distinguish three types of boundaries and plaquettes. That is, any two adjacent plaquettes have different colors from each other and from their adjacent boundary, see Fig.~\ref{fig:CCSC}, In the CC, the number of physical qubits is equal to the number of vertices, but there are two independent stabilizers for each face. A quick count then reveals that the CC encodes one ($v-2f=1$) logical qubit. Logical operators for CCs are string operators along the lattice connecting three boundaries of different colors. This implies that there exists a string along the boundary of one type that effectively connects all three boundaries.

As an example for the application of our protocol let us then consider the situation shown in Fig.~\ref{fig:CCSC}~(a), where the outcome of a quantum computation, encoded in a logical state $\ket{\psi_{\mathrm{L}}}=\alpha\ket{0}+\beta\ket{1}$ of a color code, is transferred to a quantum memory based on a surface code initialized in the state $\ket{+_{\mathrm{L}}}$. A circuit representation of this task is shown in Fig.~\ref{fig:4}. Since both initial codes in Fig.~\ref{fig:CCSC}~(a) have distance $3$, we include two ancillas in the state $\ket{+}$, as instructed in step~(\ref{step i}). Step~(\ref{step ii}) of the protocol can be omitted since the lattices in this example already have the desired structure. As laid out in step~(\ref{step iii}) and illustrated in Fig.~\ref{fig:CCSC}~(b), one then measures joint $Z$ stabilizers across the boundary, projecting the two surfaces onto a single, merged surface (M) corresponding to an eigenstate of $Z_{\mathrm{L}}\suptiny{0}{0}{\mathrm{(SC)}}Z_{\mathrm{L}}\suptiny{0}{0}{\mathrm{(CC)}}$. That is, the merging procedure projects onto a merged code that contains $Z_{\mathrm{L}}\suptiny{0}{0}{\mathrm{(SC)}}Z_{\mathrm{L}}\suptiny{0}{0}{\mathrm{(CC)}}$ as a stabilizer. The reduced $2$-dimensional logical subspace can be represented by $Z\suptiny{0}{0}{(\mathrm{M})}_{\mathrm{L}}=Z_{\mathrm{L}}\suptiny{0}{0}{\mathrm{(SC)}}$ (or equivalently, $Z\suptiny{0}{0}{(\mathrm{M})}_{\mathrm{L}}=Z_{\mathrm{L}}\suptiny{0}{0}{\mathrm{(CC)}}$) and a merged $X_{\mathrm{L}}$ operator along both surfaces, e.g., $X\suptiny{0}{0}{(\mathrm{M})}_{\mathrm{L}}=X\suptiny{0}{0}{\mathrm{(SC)}}\otimes X\suptiny{0}{0}{\mathrm{(CC)}}$. Since both color and surface codes are TSCs, the merged code can be understood as a TSC (as opposed to a TSSC) and merging operators coincide with merging stabilizers. Then, as instructed in Fig.~\ref{fig:CCSC}~(b), $X$-type stabilizers have to be extended onto ancillas while $Z$ stabilizers remain unaffected. For details on the distinction between TSSCs and TSCs, we refer to the final paragraph in Appendix~\ref{appendix:sls}. Having obtained this merged code of distance $3$, three consecutive rounds of error correction ensure fault tolerance as suggested in step~(\ref{step iv}). Note that the merging procedure increases the bias towards $X$-type errors since the minimum weight of logical $X$ operators is increased. However, note that increasing the weight (and interaction range) of $X$ stabilizers on the boundary can cause a temporarily worse error correction performance w.r.t. $Z$ errors for the merged code than for the individual codes.

Irrespective of this we can then proceed by splitting the codes. To this end, we simply measure ancillas in the $X$-basis, while refraining from further measurement of the merging stabilizers. This procedure projects onto the separate surfaces while entangling the logical subspaces. After another three rounds of error correction to preserve fault tolerance, measuring $X\suptiny{0}{0}{\mathrm{(CC)}}_{\mathrm{L}}$ teleports the information according to the circuit displayed in Fig.~\ref{fig:4}.

\section{Discussion}

We have introduced the method of subsystem lattice surgery (SLS) as a generalization of lattice surgery~\cite{HorsmanFowlerDevittVanMeter2012, LandahlRyanAnderson2014} to any pair of 2D topological codes, exploiting their similarities~\cite{BombinDuclosPoulin2012}. The applicability of our algorithm to all topological codes such that all topological features are preserved arises from their property to support logical string operators on the boundary~\cite{BravyiTerhal2009}. Therefore, the relevance of our algorithm remains unchanged even if other topological codes, such as the subsystem surface code~\cite{Bravyi-Suchara2013} (see also Appendix~\ref{appendix:ssc}) or Bacon-Shor code~\cite{Bacon2006, BrooksPreskill2013, Yoder2017}, are used as quantum memories or processors.
Indeed, our method can even be generalized beyond 2D topological codes at the expense of the 2D layout or topological features (see Appendix~\ref{appendix:beyond} for details).

In contrast to the method of code deformation~\cite{BombinMartinDelgado2009, Bombin2014, KubicaYoshidaPastawski2015}, where a single underlying lattice is smoothly deformed everywhere through multiple rounds of local Clifford transformations, we combine two initially separate lattices through operations within constant distance of their boundaries. To the best of our knowledge, other established methods for code conversion~\cite{AndersonCianciPoulin2014, HwangChoiKoHeo2015, NagayamaPhD2017} have either been applied solely to specific codes, or have not been generalized to subsystem codes.


To highlight the usefulness of incorporating SLS into future quantum computers, let us briefly assess the current contenders for fault-tolerant quantum computation. At present, one of the most promising techniques is SC quantum computation supplemented by magic state injection~\cite{BravyiKitaev2005}. However, the overhead on magic state distillation is large~\cite{BravyiKitaev1998, FowlerMariantoniMartinisCleland2012} and it is expected that more effort will be directed towards identifying more resource-efficient techniques. At the same time, other approaches to universal fault-tolerant quantum computation are significantly constrained by no-go theorems that prohibit a universal set of transversal gates in a single stabilizer code~\cite{EastinKnill2009} and transversal non-Clifford gates in 2D topological codes~\cite{BravyiKoenig2013, PastawskiYoshida2015}. The former no-go theorem can be circumvented by gauge fixing~\cite{Bombin2015, PaetznickReichardt2013} or (subsystem) lattice surgery, and is hence no issue for our approach. The latter no-go theorem can be avoided in a 3D architecture or non-topological codes~\cite{OConnorBartlett2016, BravyiCross2015, JonesBrooksharrington2016}. One potential replacement for magic state distillation is hence the 3D gauge color code~\cite{Bombin2015, BrownNickersonBrowne2016} which successfully sidesteps both no-go theorems. Even though the resource requirement is similar to that of quantum computation with the surface code~\cite{BrownNickersonBrowne2016}, 3D topological codes support other useful features such as single-shot error correction~\cite{Bombin2015-2}. As we show in Appendix~\ref{appendix:beyond}, SLS can also be employed to switch between codes of different dimensions, as well as between topological and non-topological codes. This facilitates the circumvention of both no-go theorems in an elegant fashion. At this point it should also be pointed out that many non-topological codes supporting transversal non-Clifford gates~\cite{OConnorBartlett2016,BravyiCross2015,JonesBrooksharrington2016} have lower resource requirements than comparable 3D topological codes or magic state distillation. However, while all 2D (and some 3D) TSSCs (including the merged code that appears during SLS) with local stabilizers feature error thresholds~\cite{Bombin2010, BrownNickersonBrowne2016}, error thresholds have not been proven for any of the mentioned non-topological codes. That is, in the case of non-topological codes it is not guaranteed that the storage time can be made arbitrarily large by enlarging the code distance.

In any of these cases, quantum computers can only be expected to operate as well as their weakest link. Here, this applies to the error correction code used. The clear advantage of SLS in this context is that the weakest link (i.e., code) may be employed \emph{on-demand} only and can otherwise be avoided. For instance, in a distributed architecture a code for the implementation of, e.g., a non-Clifford operation can be called only when required. In this scenario, SLS is particularly beneficial since non-Clifford operations could also unfavourably transform errors present in the system~\cite{BravyiCross2015}, while SLS does not carry such errors to other codes. SLS should thus not be seen as a stand-alone contender with other methods of realizing universal fault-tolerant quantum computation, but rather as a \emph{facilitator} thereof, allowing to selectively combine and exploit the advantages of other methods. We hence expect lattice surgery to play a crucial role in future quantum computers, be it as an element of a quantum bus or for distributed quantum computing~\cite{NagayamaPhD2017}.

Our findings further motivate experimental efforts to develop architectures that are flexible enough to implement and connect surface and color codes. For instance, ion trap quantum computers~\cite{Nigg-Blatt2014} may provide a platform soon capable of performing lattice surgery between two distinct codes. In this endeavor, the inherent flexibility attributed to multizone trap arrays~\cite{Bowler-Wineland2012} may be beneficial. Moreover, topological codes and our approach to code switching are obvious choices for fault-tolerant quantum computation with architectures requiring nearest-neighbor interactions in fixed setups, such as superconducting qubits~\cite{Barends-Martinis2014} or nitrogen-vacancy diamond arrays~\cite{YaoJiangGorshkovMaurerGiedkeCiracLukin2012}. However, given the local structure of topological codes and the simplicity of our algorithm, the requirements for any architecture are comparatively low.

Despite the inherent fault tolerance of SLS, the codes on which it is performed are nonetheless subject to errors. Further investigations of error thresholds~\cite{Preskill1997, Bombin2010} for (non-)topological codes and bench-marking~\cite{DelfosseIyerPoulin2016} are therefore required, in particular with regard to (subsystem) lattice surgery. Finally, our code switching method may find application in the design of adaptive error correction schemes~\cite{OrsucciTierschBriegel2016}.\\

\emph{Acknowledgements}. We are grateful to Rainer Blatt, Nicolas Delfosse, Vedran Dunjko, Alexander Erhard, Esteban A. Martinez, and Philipp Schindler for valuable discussions and comments. We acknowledge support from the Austrian Science Fund (FWF) through the DK-ALM: W1259-N27, the SFB FoQuS: F4012, and the START  project Y879-N27, and the Templeton World Charity Foundation Grant No. TWCF0078/AB46.

\newpage

\hypertarget{methods}
\appendix



\section{Subsystem Lattice Surgery}\label{appendix:sls}

In the main text, we have introduced a subsystem lattice surgery (SLS) protocol that is applicable to arbitrary 2D TSSCs, and we have shown how it can be used to teleport quantum information between surface and color codes. In this appendix, we explain how lattice surgery can be understood entirely within the subsystem stabilizer formalism. To this end, we define SLS by the fault-tolerant mapping $\mathcal{G}^{\mathrm{A}}\times\mathcal{G}^{\mathrm{B}}\mapsto \mathcal{G}^{\mathrm{A}}\times\mathcal{G}^{\mathrm{B}}\times\langle{P}_{\mathrm{L}}^{\mathrm{A}}\otimes {P}_{\mathrm{L}}^{\mathrm{B}}\rangle$, where the merging procedure as described in steps~(\ref{step i})-(\ref{step iv}) and formalized in Theorem~\ref{theorem} is an initialization of an intermediate, merged code $\mathcal{G}^{\mathrm{M}}$ and the splitting fixes gauge degrees-of-freedom to obtain $\mathcal{G}^{\mathrm{A}}\times\mathcal{G}^{\mathrm{B}}\times\langle{P}_{\mathrm{L}}^{\mathrm{A}}\otimes {P}_{\mathrm{L}}^{\mathrm{B}}\rangle$ from $\mathcal{G}^{\mathrm{M}}$.

To see this, we again consider two standard\footnote{We exclude certain pathological ``exotic" codes as will be explained in the proof of Lemma~\ref{lemma1}.} TSSCs, $\mathcal{G}^{\mathrm{A}}$ and $\mathcal{G}^{\mathrm{B}}$. Adopting the previous notation, we assume that the lattices have been chosen such that the logical operators $P_{\mathrm{L}}^{I}$ $(I=\mathrm{A},\mathrm{B})$ along their boundaries have support on $N_{\mathrm{A}}$ and $N_{\mathrm{B}}$ qubits, respectively. For ease of presentation we further choose $N_{\mathrm{A}}=N_{\mathrm{B}}=N$.
Now, let $\mathcal{G}^{I}$ with $I=\mathrm{A},\mathrm{B}$ be codes characterized by the tuples $[[n_I, k_I, g_I, d_I]]$, where $n_I$ are the numbers of physical qubits, $k_I$ the numbers of logical qubits, $g_I$ the numbers of gauge qubits, and $d_I$ are the distances, respectively. We choose the generating set $\{ G^I_1,...,G^I_{s_I+2g_I}\}$ for code $\mathcal{G}^I$, where $s_I=n_I-k_I-g_I$ is the number of independent stabilizer generators (or the number of stabilizer qubits in accordance with our previous terminology). The lattices are then aligned and prepared as in step~(\ref{step ii}) of the algorithm described in the main text, but we omit including ancillas for now. We proceed by defining the merging operators
\begin{align}
    G^{\mathrm{M}}_{i}   &=\,P_{\mathrm{L},i}^{\mathrm{A}}P_{\mathrm{L},i}^{\mathrm{B}}\:\:\:\forall i=1,...,N\,
    \label{merger2}
\end{align}
on the merged lattice $\Lambda_{\mathrm{M}}$. From these merging operators, we then collect $\Delta g\leq N-1$ (where the meaning of this notation becomes apparent in Appendix~\ref{appendix:code}) inequivalent ones (w.r.t. $\mathcal{G}^{\mathrm{A}}\times \mathcal{G}^{\mathrm{B}}\times\langle P^{\mathrm{A}}_{\mathrm{L}}\otimes P^{\mathrm{B}}_{\mathrm{L}}\rangle$) in the set $\mathcal{W}=\{G^{\mathrm{M}}_i\}_{i=1,...,\Delta g}$, and call the group generated by these operators $\mathcal{M}$. We refer to elements of $\mathcal{W}$ as merging generators. This allows us to define the subsystem code
\begin{align}
\mathcal{G}^{\mathrm{M}} &\defeq\langle G^{\mathrm{A}}_1,...,G^{\mathrm{A}}_{s_{\mathrm{A}}+2g_{\mathrm{A}}}\rangle
                \times\langle G^{\mathrm{B}}_1,...,G^{\mathrm{B}}_{s_{\mathrm{B}}+2g_{\mathrm{B}}}\rangle\times\mathcal{M}\times \langle i \rangle.
                \label{merged}
\end{align}

Note that, technically, Eq.~(\ref{merged}) specifies a subsystem stabilizer code only if the center of $\mathcal{G}^{\mathrm{M}}$ is non-empty, such that a stabilizer subgroup $\mathcal{S}^{\mathrm{M}}$ can be defined. That this is indeed the case follows from Lemma~\ref{lemma2} in Appendix~\ref{appendix:code}. As we will show next, the code $\mathcal{G}^{\mathrm{M}}$  has the structure of the merged code discussed in Theorem~\ref{theorem}. The essential features of this structure can be captured in the following Lemma.
\begin{lem}
    The code $\mathcal{G}^{\mathrm{M}}$ defined in Eq.~{\normalfont (\ref{merged})} is a TSSC on the merged lattice $\Lambda_{\mathrm{M}}$ with distance no less than $d_{\mathrm{min}}=\min(\{d_{\mathrm{A}},d_{\mathrm{B}}\})$. In addition, $\mathcal{G}^{\mathrm{M}}$ can be gauge fixed to $\mathcal{G}^{\mathrm{A}}\times\mathcal{G}^{\mathrm{B}}\times \langle{P}_{\mathrm{L}}^{\mathrm{A}}\otimes {P}_{\mathrm{L}}^{\mathrm{B}}\rangle$, effectively splitting the code into $\mathcal{G}^{\mathrm{A}}$ and $\mathcal{G}^{\mathrm{B}}$.
    \label{lemma1}
\end{lem}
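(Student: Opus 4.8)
The plan is to establish the three claims in turn: (i) $\mathcal{G}^{\mathrm{M}}$ is a genuine subsystem code on $\Lambda_{\mathrm{M}}$ (i.e. it has non-empty center so a stabilizer subgroup $\mathcal{S}^{\mathrm{M}}$ exists), (ii) it is \emph{topological}, meaning the generators can be taken local with respect to the interaction range of the merged lattice prepared in step~(\ref{step ii}), (iii) its distance is at least $d_{\mathrm{min}}$, and finally (iv) that gauge fixing recovers $\mathcal{G}^{\mathrm{A}}\times\mathcal{G}^{\mathrm{B}}\times\langle P^{\mathrm{A}}_{\mathrm{L}}\otimes P^{\mathrm{B}}_{\mathrm{L}}\rangle$. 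For (i) I would defer to Lemma~\ref{lemma2} as the excerpt already signals, so the real content here is (ii), (iii), (iv). For the topological property, the key observation is that each merging generator $G^{\mathrm{M}}_i=P^{\mathrm{A}}_{\mathrm{L},i}P^{\mathrm{B}}_{\mathrm{L},i}$ is a product of single-qubit Paulis on qubits $i\in Q_{\mathrm{A}}$ and $i\in Q_{\mathrm{B}}$ which, by the construction in steps~(\ref{step ii}) and the ordering along paths from Ref.~\cite{BravyiTerhal2009}, lie within a bounded horizontal strip; since step~(\ref{step ii}) padded the lattices with redundant vertices precisely so that $Q_{\mathrm{A}}$ and $Q_{\mathrm{B}}$ have matching qubit counts in each strip of height $r$, the support of $G^{\mathrm{M}}_i$ fits in a square of side $O(r)$ and the generators inherit translational invariance at the appropriate scale. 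The generators of $\mathcal{G}^{\mathrm{A}}$ and $\mathcal{G}^{\mathrm{B}}$ were already local, and aligning the lattices along the vertical boundary only enlarges the interaction range by at most a constant $r^2$; hence $\mathcal{G}^{\mathrm{M}}$ meets the TSSC definition given in the Framework section.

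For the distance bound (iii), I would argue that any logical operator of $\mathcal{G}^{\mathrm{M}}$ restricts to a logical (or trivial) operator on at least one of the two sublattices. Concretely, let $L$ be a nontrivial element of $N(\mathcal{S}^{\mathrm{M}})$ that is not in $\mathcal{G}^{\mathrm{M}}$. Because the two lattices share no qubits, $L$ factors as $L=L_{\mathrm{A}}\otimes L_{\mathrm{B}}$ (up to ancilla factors, absent here). Commutation of $L$ with the sublattice stabilizer generators $G^{\mathrm{A}}_j, G^{\mathrm{B}}_j \subset \mathcal{S}^{\mathrm{M}}$ forces $L_{\mathrm{A}}\in N(\mathcal{S}^{\mathrm{A}})$ and $L_{\mathrm{B}}\in N(\mathcal{S}^{\mathrm{B}})$; commutation with the merging generators relates their logical classes. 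If $L$ is a genuine logical operator of $\mathcal{G}^{\mathrm{M}}$, then $L_{\mathrm{A}}$ cannot lie in $\mathcal{G}^{\mathrm{A}}$ (else $L$ would reduce to a logical operator supported purely on the $\mathrm{B}$ side, and one pushes the argument there), so $w(L_{\mathrm{A}})\geq d_{\mathrm{A}}$ or symmetrically $w(L_{\mathrm{B}})\geq d_{\mathrm{B}}$, giving $w(L)\geq d_{\mathrm{min}}$. I would also need to verify that multiplying by elements of $\mathcal{M}$ cannot create a shortcut: since merging generators are strings stretching across the full strip, they do not lower the weight below a logical string of either code. This dichotomy argument is where I expect the real work — one must be careful that ``exotic'' TSSCs are excluded (as the footnote flags) precisely so that the reduced strings behave, and that no cancellation against $\mathcal{M}$ produces a low-weight representative.

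For the gauge-fixing claim (iv), I would show that $\mathcal{G}^{\mathrm{M}}$ and $\mathcal{G}^{\mathrm{A}}\times\mathcal{G}^{\mathrm{B}}\times\langle P^{\mathrm{A}}_{\mathrm{L}}\otimes P^{\mathrm{B}}_{\mathrm{L}}\rangle$ have the same number of physical qubits and that the latter is obtained by promoting $P^{\mathrm{A}}_{\mathrm{L}}\otimes P^{\mathrm{B}}_{\mathrm{L}}$ (which equals $\prod_i G^{\mathrm{M}}_i$ and hence lies in $\mathcal{M}\subseteq\mathcal{G}^{\mathrm{M}}$) from a gauge operator to a stabilizer, while demoting the conjugate partner gauge operators. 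Counting: the merging operators introduced $\Delta g$ new independent generators into $\mathcal{G}^{\mathrm{M}}$ relative to $\mathcal{G}^{\mathrm{A}}\times\mathcal{G}^{\mathrm{B}}$, half of which become stabilizers and half gauge qubits under an appropriate partition; fixing these gauge qubits to definite eigenvalues (the splitting measurement in the $X$-basis on ancillas, in the main-text language) collapses $\mathcal{G}^{\mathrm{M}}$ exactly onto the target group. I would make this precise using Lemma~\ref{lemma2}'s identification of $\mathcal{S}^{\mathrm{M}}$ and the dimension count $k_{\mathrm{M}}=n_{\mathrm{M}}-s_{\mathrm{M}}-g_{\mathrm{M}}$, checking that it yields exactly $k_{\mathrm{A}}+k_{\mathrm{B}}-1$ logical qubits, consistent with the projection onto a joint $P^{\mathrm{A}}_{\mathrm{L}}\otimes P^{\mathrm{B}}_{\mathrm{L}}$ eigenstate. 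The main obstacle overall is the distance argument (iii): the factorization-and-reduction step is intuitively clear but requires the careful exclusion of pathological codes and a genuine check that products with merging generators never shorten a logical representative below $d_{\mathrm{min}}$.
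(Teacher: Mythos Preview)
Your outline matches the paper's structure (locality, distance, gauge fixing) and you correctly flag the distance argument as the crux and the place where exotic codes must be excluded. But the specific reduction you propose in (iii) has a genuine gap. You take $L\in N(\mathcal{S}^{\mathrm{M}})\setminus\mathcal{G}^{\mathrm{M}}$, factor $L=L_{\mathrm{A}}\otimes L_{\mathrm{B}}$, and then assert that commutation with ``the sublattice stabilizer generators $G^{\mathrm{A}}_j,G^{\mathrm{B}}_j\subset\mathcal{S}^{\mathrm{M}}$'' forces $L_{\mathrm{A}}\in N(\mathcal{S}^{\mathrm{A}})$. This does not follow: the whole effect of adjoining the merging generators is that some elements of $\mathcal{S}^{\mathrm{A}}\times\mathcal{S}^{\mathrm{B}}$ anticommute with $\mathcal{M}$ and are therefore demoted to gauge operators in $\mathcal{G}^{\mathrm{M}}$. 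Hence $\mathcal{S}^{\mathrm{M}}\subsetneq\mathcal{S}^{\mathrm{A}}\times\mathcal{S}^{\mathrm{B}}\times\langle P^{\mathrm{A}}_{\mathrm{L}}\otimes P^{\mathrm{B}}_{\mathrm{L}}\rangle$, so $N(\mathcal{S}^{\mathrm{M}})$ is \emph{strictly larger} than $N(\mathcal{S}^{\mathrm{A}}\times\mathcal{S}^{\mathrm{B}})$, and $L_{\mathrm{A}}$ can be a perfectly ordinary detectable error of code~$\mathrm{A}$ while $L$ still commutes with all of $\mathcal{S}^{\mathrm{M}}$. Your dichotomy ``$w(L_{\mathrm{A}})\geq d_{\mathrm{A}}$ or push to the $\mathrm{B}$ side'' then breaks down.

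The paper repairs exactly this point by passing to \emph{bare} logicals: one works in $N(\mathcal{G}^{\mathrm{M}})$ rather than $N(\mathcal{S}^{\mathrm{M}})$ and invokes the isomorphism $N(\mathcal{S})/\mathcal{G}\simeq N(\mathcal{G})/\langle i\rangle\mathcal{S}$ to carry the bound back to dressed logicals. Since $\mathcal{G}^{\mathrm{A}}\times\mathcal{G}^{\mathrm{B}}\subset\mathcal{G}^{\mathrm{M}}$ holds by construction, the inclusion $N(\mathcal{G}^{\mathrm{M}})\subset N(\mathcal{G}^{\mathrm{A}}\times\mathcal{G}^{\mathrm{B}})$ \emph{does} go the right way, and one gets that every bare logical of the merged code lies in $\mathcal{L}^{\mathrm{A}}\times\mathcal{L}^{\mathrm{B}}$; only then does the factor-and-reduce argument you sketch become valid, with the final check against deformations by $\mathcal{M}$ being precisely where the pathological codes are excluded. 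A smaller correction in (iv): the merging generators do not split ``half stabilizers, half gauge''. All $\Delta g$ independent $G^{\mathrm{M}}_i$ become gauge operators, each paired (via Lemma~\ref{lemma2}) with a demoted former stabilizer $S_i\in\mathcal{S}^{\mathrm{A}}\times\mathcal{S}^{\mathrm{B}}$, while exactly one new stabilizer $\prod_i G^{\mathrm{M}}_i\sim P^{\mathrm{A}}_{\mathrm{L}}\otimes P^{\mathrm{B}}_{\mathrm{L}}$ is gained; gauge fixing is then simply re-measuring those $S_i$.
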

\begin{proof}
First, let us show that $\mathcal{G}^{\mathrm{M}}$ obeys the locality condition for TSSCs. By definition, generators of the separated codes are also generators of the merged code. Their interaction range is increased by at most a constant along the vertical direction due to the relabeling in step~(\ref{step ii}). Thereby, the interaction range of merging generators is also kept constant. The generators of the code $\mathcal{G}^{\mathrm{M}}$ are hence all local.

Second, we verify that the distance of the merged code is at least $d_{\mathrm{min}}$. Since $\mathcal{G}^{\mathrm{A}}\times\mathcal{G}^{\mathrm{B}}\times \langle{P}_{\mathrm{L}}^{\mathrm{A}}\otimes {P}_{\mathrm{L}}^{\mathrm{B}}\rangle$ is a subgroup of $\mathcal{G}^{\mathrm{M}}$, the normalizer $N(\mathcal{G}^{\mathrm{M}})$ is a subgroup of $N(\mathcal{G}^{\mathrm{A}}\times\mathcal{G}^{\mathrm{B}}\times \langle{P}_{\mathrm{L}}^{\mathrm{A}}\otimes {P}_{\mathrm{L}}^{\mathrm{B}}\rangle)$. Then, note that there exist stabilizers $S\in \mathcal{S}^{\mathrm{A}}\times\mathcal{S}^{\mathrm{B}}$ that anticommute with some $G^{\mathrm{M}}\in\mathcal{M}$. If that was not the case, all $G^{\mathrm{M}}|_{Q_I}$ would either be elements of the code $\mathcal{G}^I$ or undetectable errors. The latter option can be ruled out in accordance with the argument in the proof of Lemma~\ref{lemma2}. The former option, $G^{\mathrm{M}}|_{Q_I}\in\mathcal{G}^I$ $\forall~G^{\mathrm{M}}\in\mathcal{M}$, would imply that $P_{\mathrm{L}}^I$ is not a logical operator. In a quantum error-correction code, such mutually anticommuting operators act either on the logical subspace or on gauge qubits. However, any stabilizer of the separate codes is also contained in $\mathcal{G}^{\mathrm{M}}$ according to the definition of the merged code in Eq.~(\ref{merged}). Therefore, the aforementioned anticommuting stabilizers $S$ must act on gauge qubits and we can hence identify\footnote{Recall that if $\mathcal{G}^{\mathrm{M}}$ is a subsystem stabilizer code, it can be decomposed as $\mathcal{G}^{\mathrm{M}}=\mathcal{L}_\mathrm{G}^{\mathrm{M}}\times\mathcal{S}^{\mathrm{M}}\times\langle i\rangle$, as explained in the main text. We will further discuss this structure in Appendix~\ref{appendix:code}.} $S$ as belonging to $\mathcal{L}^{\mathrm{M}}_\mathrm{G}$. Since $\mathcal{S}^{\mathrm{A}}\times\mathcal{S}^{\mathrm{B}}$ contains elements that are not in the stabilizer of $\mathcal{G}^{\mathrm{M}}$ but not vice versa, $\mathcal{S}^{\mathrm{M}}$ must be a subgroup of $\mathcal{S}^{\mathrm{A}}\times \mathcal{S}^{\mathrm{B}}\times \langle{P}_{\mathrm{L}}^{\mathrm{A}}\otimes {P}_{\mathrm{L}}^{\mathrm{B}}\rangle$. Consequently, any equivalence class of $N(\mathcal{G}^{\mathrm{M}})/\mathcal{S}^{\mathrm{M}}$ is contained in an equivalence class of $N(\mathcal{G}^{\mathrm{A}}\times \mathcal{G}^{\mathrm{B}}\times \langle{P}_{\mathrm{L}}^{\mathrm{A}}\otimes {P}_{\mathrm{L}}^{\mathrm{B}}\rangle)/(\mathcal{S}^{\mathrm{A}}\times \mathcal{S}^{\mathrm{B}}\times \langle{P}_{\mathrm{L}}^{\mathrm{A}}\otimes {P}_{\mathrm{L}}^{\mathrm{B}}\rangle)$.

The quotient group $N(\mathcal{G})/\mathcal{S}$ defines so called bare logical operators that do not act on gauge qubits.
Generally, $N(\mathcal{G})/\langle i\rangle\mathcal{S}$ contains all information about the logical operators but not about all undetectable errors. At the same time, undetectable errors are related to bare logical operators, since $N(\mathcal{S})/\mathcal{G}\simeq N(\mathcal{G})/\langle i\rangle\mathcal{S}$ for any subsystem stabilizer code $\mathcal{G}$~\cite{Bombin2010}. To conclude then that the distance of $\mathcal{G}^{\mathrm{M}}$ is $d_{\mathrm{min}}$, we have to verify that deformations (i.e., multiplications) of any nontrivial, logical operators $P\in N(\mathcal{G}^{\mathrm{M}})/\mathcal{S}^{\mathrm{M}}$ by operators in $\mathcal{G}^{\mathrm{M}}$ yield operators of weight at least $d_{\mathrm{min}}$. Since any such $P$ is also contained in $\mathcal{L}^{\mathrm{A}}\times\mathcal{L}^{\mathrm{B}}$ this is true for deformations by operators in $\mathcal{G}^{\mathrm{A}}\times\mathcal{G}^{\mathrm{B}}$. But we still need to confirm this for operators in $\mathcal{G}^{\mathrm{M}}-(\mathcal{G}^{\mathrm{A}}\times\mathcal{G}^{\mathrm{B}})$. Specifically, we have to consider deformations under $\langle G_i^{\mathrm{M}}\rangle_{i=1,...,N}$. In principle, these merging operators can reduce the weight of operators $P$ below $d_{\mathrm{min}}$. However, this can only happen under rather exotic circumstances. That is, only for logical operators of the form $P=P^{\mathrm{A}}\otimes P^{\mathrm{B}}$, where $P^I\in\mathcal{L}^I$ ($I=\mathrm{A},\mathrm{B}$) and $P^{I}\nsim P_{\mathrm{L}}^I$ acts as $P^{I}_{\mathrm{L}}$ on a region $K_{I}\subset Q_I$ such that for the complementary regions $\bar{K}_{I}$ (with $\bar{K}_{I}\cup K_{I}$ being the set of all qubits on lattice $\Lambda_{I}$ and\footnote{Here we make use of the fact that $Q_{\mathrm{A}}$ and $Q_{\mathrm{B}}$ have been labeled equally (cf. main text) and write with a slight abuse of notation $\bar{K}_{I}\cap K_{I}=K_I-K_{\bar{I}}$ (where $\bar{I}=\mathrm{B}$ if $I=\mathrm{A}$ and vice versa).} $\bar{K}_{I}\cap K_{I}=K_I-K_{\bar{I}}$) the weights of these operators satisfy $0<w(P^{\mathrm{A}}|_{\bar{K}_{\mathrm{A}}})+w(P^{\mathrm{B}}|_{\bar{K}_{\mathrm{B}}})<d_{\mathrm{min}}$ independently of the system size. In such a case, one could define the logical operator $P^{\mathrm{A}}\otimes P^{\mathrm{B}}\otimes \prod_{i\in K_{\mathrm{A}}\cap K_{\mathrm{B}}}G_i^{\mathrm{M}}$ with weight $0<w(P^{\mathrm{A}}|_{\bar{K}_{\mathrm{A}}})+w(P^{\mathrm{B}}|_{\bar{K}_{\mathrm{B}}})<d_{\mathrm{min}}$. At the same time, this also constricts $P_{\mathrm{L}}^I$ since $P^I$ must be such that $w(P^I\otimes P_{\mathrm{L}}^I)\geq d_{\mathrm{min}}$. Since this describes extremely restricted cases that (to the best of our knowledge) do not feature in any practical scenario, we will exclude codes with such properties from our construction.

Note that we can also exclude cases where $w(P^{\mathrm{A}}|_{\bar{K}_{\mathrm{A}}})+w(P^{\mathrm{B}}|_{\bar{K}_{\mathrm{B}}})=0$.
In such cases at least two logical operators $P^{\mathrm{A}}, P^{\mathrm{B}}\in\mathcal{L}^{\mathrm{A}}\times\mathcal{L}^{\mathrm{B}}$ with $\text{supp}(P^I)\subset \text{supp}(P_{\mathrm{L}}^I)$ ($I=\mathrm{A},\mathrm{B}$) exist. However, we are only interested in projecting onto the joint eigenstate of any two logical operators $P_{\mathrm{L}}^{\mathrm{A}}$ and $P_{\mathrm{L}}^{\mathrm{B}}$ along the boundary. Therefore, we can always exclude the aforementioned case by choosing $P_{\mathrm{L}}^{\mathrm{A}}$ and $P_{\mathrm{L}}^{\mathrm{B}}$ such that there does not exist a logical operator that has support on a subset of $Q_I$. In the case of TSSC with nonlocal stabilizers, such logical operators may have support on disconnected regions of a lattice~\cite{Bombin2010}. Then, these regions have to be connected using local generators to obtain proper string operators.
Then, any nontrivial logical operator of the merged code has support on at least $d_{\mathrm{min}}$ qubits. The distance of the merged code is hence at least $d_{\mathrm{min}}$.

Third, the merged code is scalable since the separate codes are scalable. Thus, we can conclude that $\mathcal{G}^{\mathrm{M}}$ is indeed a TSSC. At last, note that fixing $\mathcal{S}^{\mathrm{M}}$ to $\mathcal{S}^{\mathrm{A}}\times\mathcal{S}^{\mathrm{B}}$ through a measurement of stabilizers $S\in\mathcal{S}^{\mathrm{A}}\times\mathcal{S}^{\mathrm{B}}$ anticommuting with at least one merging generator, the separate codes can be recovered while retaining the eigenstate of ${P}_{\mathrm{L}}^{\mathrm{A}}\otimes {P}_{\mathrm{L}}^{\mathrm{B}}$. This gauge fixing has to be accompanied by error-correction to ensure full fault tolerance.
\end{proof}

To finally prove Theorem~\ref{theorem}, we will now explain the connection between the merged code $\mathcal{G}^{\mathrm{M}}$ defined in Eq.~(\ref{merged}) and the framework discussed in the main text. The merging procedure described in steps~(\ref{step ii})-(\ref{step iv}) can be understood as an initialization of the merged code. Specifically, measuring merging generators as in Eq.~(\ref{merger2}) [or similarly, Eq.~(\ref{merger})] initializes $\mathcal{G}^{\mathrm{M}}$ from the prepared codes $\mathcal{G}^{\mathrm{A}}\times\mathcal{G}^{\mathrm{B}}$. The parity of all random measurement outcomes yields the parity of $P_{\mathrm{L}}^{\mathrm{A}}\otimes P_{\mathrm{L}}^{\mathrm{B}}$ and error correction has to be performed thereafter in order to ensure correctness [see step~(\ref{step iv})]. Fault-tolerance is guaranteed since merging generators are constant-weight and errors can only propagate to a constant number of physical qubits. Since we have established in Lemma~\ref{lemma1} that $\mathcal{G}^{\mathrm{M}}$ is indeed a TSSC with distance at least $d_{\mathrm{min}}$, this concludes the proof of Theorem~\ref{theorem}.

Altogether, by proving Theorem~\ref{theorem} and Lemma~\ref{lemma1}, we showed that SLS is fault-tolerant and well-defined through a merging, i.e., a mapping $\mathcal{G}^{\mathrm{A}}\times\mathcal{G}^{\mathrm{B}}\mapsto\mathcal{G}^{\mathrm{M}}$, followed by a splitting, i.e., a mapping $\mathcal{G}^{\mathrm{M}}\mapsto\mathcal{G}^{\mathrm{A}}\times\mathcal{G}^{\mathrm{B}}\times\langle P_{\mathrm{L}}^{\mathrm{A}}\otimes P_{\mathrm{L}}^{\mathrm{B}}\rangle$.

Concluding this Appendix, we return to the discussion of the ancillas added during step~({\ref{step i}}) of the preparation. As already discussed in the main text, ancillas can be understood as additional stabilizer qubits included into the initial codes. In the merged code these become gauge qubits. The usefulness of ancillas arises when one considers TSCs. In that case the merged code can be understood as a TSC with stabilizers merged across the boundary. Here, introducing ancillas means that stabilizers of one surface are instead merged with stabilizers of ancillas, allowing for the possibility of reducing the interaction range of the merged code.

\section{Merged Code Structure}\label{appendix:code}

In this Appendix, we provide a careful analysis of the merged code. In particular, we will discuss the structure of the subgroups $\mathcal{S}^{\mathrm{M}}$ and $\mathcal{L}_\mathrm{G}^{\mathrm{M}}$.
Here we claim that $\mathcal{G}^{\mathrm{M}}$ is a $[[n_{\mathrm{A}}+n_{\mathrm{B}}, k_{\mathrm{A}}+k_{\mathrm{B}}-1, g_{\mathrm{A}}+g_{\mathrm{B}}+\Delta g, d_{\mathrm{M}}]]$ code with distance $d_{\mathrm{M}}\geq d_{\mathrm{min}}$. That is, the distance $d_{\mathrm{M}}$ merged code has $n_{\mathrm{A}}+n_{\mathrm{B}}$ physical qubits, $k_{\mathrm{A}}+k_{\mathrm{B}}-1$ logical qubits and $g_{\mathrm{A}}+g_{\mathrm{B}}+\Delta g$ gauge qubits where $\Delta g$ is the number of merging generators [see Eq.~(\ref{merger2})]. To see this, we have to understand the structure of $\mathcal{S}^{\mathrm{M}}$ and $\mathcal{L}_\mathrm{G}^{\mathrm{M}}$ in comparison with the separate codes $\mathcal{G}^{\mathrm{A}}\times\mathcal{G}^{\mathrm{B}}$.

First, note that we can define $\Delta g$ new gauge qubits via their respective Pauli operators $\langle G_i^{\mathrm{M}}, S_i\rangle_{i=1,...,\Delta g}$ where $S_i\in S^{\mathrm{A}}\times S^{\mathrm{B}}$ $\forall i$ such that $[G_i^{\mathrm{M}}, S_j]=0$ for $i\neq j$. That is, there exist $\Delta g$ independent stabilizers which are now included as gauge operators. As we will see, the existence of such stabilizers is guaranteed by the choice of logical operators $P_{\mathrm{L}}^{\mathrm{A}}$ and $P_{\mathrm{L}}^{\mathrm{B}}$ as explained in the proof of Lemma~\ref{lemma1}.
\begin{lem}\label{lemma2}
For the operators $P_{\mathrm{L}}^{\mathrm{A}}$ and $P_{\mathrm{L}}^{\mathrm{B}}$ chosen above there exists a stabilizer $S_i\in\mathcal{S}^{\mathrm{A}}\times\mathcal{S}^{\mathrm{B}}$ for any merging generator $G^{\mathrm{M}}_i\in\mathcal{W}$ such that $\{G_i^{\mathrm{M}}, S_i\}=0$ and $[G_j^{\mathrm{M}},S_i]=0$ $\forall~G_j^{\mathrm{M}}\in\mathcal{W}-\{G^{\mathrm{M}}_i\}$.
\end{lem}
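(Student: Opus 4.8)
The plan is to reduce the claim to the surjectivity of a single $\mathbb{F}_2$-linear map and then settle that by a duality argument that feeds off the way $\mathcal{W}$ was chosen. Write $\mathcal{M}=\langle G^{\mathrm{M}}_1,\dots,G^{\mathrm{M}}_{\Delta g}\rangle$. The merging generators $G^{\mathrm{M}}_i=P^{\mathrm{A}}_{\mathrm{L},i}P^{\mathrm{B}}_{\mathrm{L},i}$ of Eq.~(\ref{merger2}) act on pairwise disjoint qubit sets and hence commute, so $\mathcal{M}\cong\mathbb{F}_2^{\Delta g}$; for $v\in\mathbb{F}_2^{\Delta g}$ put $G_v=\prod_{i:\,v_i=1}G^{\mathrm{M}}_i=G_v^{\mathrm{A}}\otimes G_v^{\mathrm{B}}$ with $G_v^{I}=\prod_{i:\,v_i=1}P^{I}_{\mathrm{L},i}$ acting only on qubits of $\Lambda_{I}$. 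Consider the $\mathbb{F}_2$-linear map $\rho:\mathcal{S}^{\mathrm{A}}\times\mathcal{S}^{\mathrm{B}}\to\mathbb{F}_2^{\Delta g}$ whose $i$-th component is $0$ if $S$ commutes with $G^{\mathrm{M}}_i$ and $1$ if it anticommutes. The statement of the lemma is exactly that each standard basis vector $e_i$ lies in the image of $\rho$, i.e.\ that $\rho$ is surjective.

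To prove surjectivity I would dualize and argue by contradiction. Since commutation indicators are additive in each slot, $v\cdot\rho(S)$ equals the commutation indicator of $G_v$ with $S$; hence if $\rho$ is not onto there is a nonzero $v$ with $v\cdot\rho(S)=0$ for all $S\in\mathcal{S}^{\mathrm{A}}\times\mathcal{S}^{\mathrm{B}}$, i.e.\ $G_v\in N(\mathcal{S}^{\mathrm{A}}\times\mathcal{S}^{\mathrm{B}})$. So it suffices to show $\mathcal{M}\cap N(\mathcal{S}^{\mathrm{A}}\times\mathcal{S}^{\mathrm{B}})=\langle\mathbbm{1}\rangle$. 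Assume $v\neq 0$ with $G_v\in N(\mathcal{S}^{\mathrm{A}}\times\mathcal{S}^{\mathrm{B}})$. Because $\mathcal{S}^{\mathrm{A}}$ and $\mathcal{S}^{\mathrm{B}}$ are supported on the disjoint qubit sets of $\Lambda_{\mathrm{A}}$ and $\Lambda_{\mathrm{B}}$, this splits into $G_v^{\mathrm{A}}\in N(\mathcal{S}^{\mathrm{A}})$ and $G_v^{\mathrm{B}}\in N(\mathcal{S}^{\mathrm{B}})$ separately, while $\mathrm{supp}(G_v^{I})\subseteq\{1,\dots,\Delta g\}$ is a \emph{proper} subset of $Q_I=\{1,\dots,N\}$ since $\Delta g\leq N-1$ and every single-qubit component $P^{I}_{\mathrm{L},i}$ is nontrivial (it lies in $\mathrm{supp}(P^{I}_{\mathrm{L}})$ by definition of $Q_I$).

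The heart of the argument is then to conclude $G_v^{I}\in\mathcal{G}^{I}$ for $I=\mathrm{A},\mathrm{B}$. Otherwise $G_v^{I}\in N(\mathcal{S}^{I})\setminus\mathcal{G}^{I}$ would be a nontrivial logical operator of $\mathcal{G}^{I}$ supported on a proper subset of $Q_I$ — exactly the configuration that was excluded when $P^{\mathrm{A}}_{\mathrm{L}}$ and $P^{\mathrm{B}}_{\mathrm{L}}$ were fixed in the proof of Lemma~\ref{lemma1} (the step ruling out ``exotic'' codes; in the non-local-stabilizer case one first joins disconnected pieces of such a putative operator by local generators). Hence $G_v^{\mathrm{A}}\in\mathcal{G}^{\mathrm{A}}$, $G_v^{\mathrm{B}}\in\mathcal{G}^{\mathrm{B}}$, and so $G_v\in\mathcal{G}^{\mathrm{A}}\times\mathcal{G}^{\mathrm{B}}\subseteq\mathcal{G}^{\mathrm{A}}\times\mathcal{G}^{\mathrm{B}}\times\langle P^{\mathrm{A}}_{\mathrm{L}}\otimes P^{\mathrm{B}}_{\mathrm{L}}\rangle$. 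This contradicts the defining property of $\mathcal{W}$, namely that $G^{\mathrm{M}}_1,\dots,G^{\mathrm{M}}_{\Delta g}$ are inequivalent — hence $\mathbb{F}_2$-independent — modulo $\mathcal{G}^{\mathrm{A}}\times\mathcal{G}^{\mathrm{B}}\times\langle P^{\mathrm{A}}_{\mathrm{L}}\otimes P^{\mathrm{B}}_{\mathrm{L}}\rangle$. Thus $v=0$, $\rho$ is surjective, and choosing $S_i\in\mathcal{S}^{\mathrm{A}}\times\mathcal{S}^{\mathrm{B}}$ with $\rho(S_i)=e_i$ gives $\{G^{\mathrm{M}}_i,S_i\}=0$ and $[G^{\mathrm{M}}_j,S_i]=0$ for all $j\neq i$, as required.

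I expect this last step to be the main obstacle: one has to be careful that the genericity condition imposed on $P^{\mathrm{A}}_{\mathrm{L}},P^{\mathrm{B}}_{\mathrm{L}}$ for Lemma~\ref{lemma1} — no nontrivial logical operator supported on a proper subset of $Q_I$ — is precisely what forbids $G_v^{I}\notin\mathcal{G}^{I}$, and that it is the strict bound $\Delta g\leq N-1$ (not merely $\Delta g\leq N$) that makes $\mathrm{supp}(G_v^{I})$ a \emph{proper} subset of $Q_I$, which is exactly what allows that exclusion to be invoked.
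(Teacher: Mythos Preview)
Your proof is correct and follows essentially the same route as the paper's: both argue by contradiction that some nontrivial product of merging generators would lie in $N(\mathcal{S}^{\mathrm{A}}\times\mathcal{S}^{\mathrm{B}})$, then use the inequivalence defining $\mathcal{W}$ together with the genericity condition on $P_{\mathrm{L}}^{I}$ (no nontrivial logical supported on a proper subset of $Q_I$) to obtain a contradiction. The only difference is packaging --- you phrase the reduction via the surjectivity of the $\mathbb{F}_2$-syndrome map $\rho$ and run the contradiction as ``genericity $\Rightarrow G_v\in\mathcal{G}^{\mathrm{A}}\times\mathcal{G}^{\mathrm{B}}\Rightarrow$ contradicts inequivalence'', whereas the paper runs the contrapositive ``inequivalence $\Rightarrow G_i^{\mathrm{M}}G\notin\mathcal{G}^{\mathrm{A}}\times\mathcal{G}^{\mathrm{B}}\Rightarrow$ nontrivial logical $\Rightarrow$ contradicts genericity'' --- and your version makes the role of $\Delta g\leq N-1$ more explicit.
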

\begin{proof}
Suppose such a stabilizer does not exist. Then, there exists a $G\in\mathcal{M}$ that yields the same error syndrome as $G_i^{\mathrm{M}}$, i.e., $\bra{\psi}G_i^{\mathrm{M}}G\ket{\psi}=0$ for any codeword $\ket{\psi}\in\mathcal{C}^{\mathrm{A}}\times\mathcal{C}^{\mathrm{B}}$. Since $G_i^{\mathrm{M}}G\notin\mathcal{G}^{\mathrm{A}}\times\mathcal{G}^{\mathrm{B}}$, it is a nontrivial logical operator in $\mathcal{L}^{\mathrm{A}}\times\mathcal{L}^{\mathrm{B}}$. Combining this with the fact that $\mathcal{W}\subset\{G^{\mathrm{M}}_j\}_{j=1,...,N}$, the existence of the logical operator $G_i^{\mathrm{M}}G$ is not compatible with our construction of $P_{\mathrm{L}}^I$ as discussed in the proof of Lemma~\ref{lemma1}. That is, there exist no logical operators $P^I\in\mathcal{L}^I$ with $\text{supp}(P^I)\subset \text{supp}(P_{\mathrm{L}}^I)$ ($I=\mathrm{A},\mathrm{B}$) for our choice of $P_{\mathrm{L}}^I$.
\end{proof}

We have hence defined $\Delta g$ new gauge qubits by $\mathcal{W}$ and the same number of independent stabilizers. As another consequence of Lemma~\ref{lemma2}, we can also define an additional stabilizer equivalent to $P^{\mathrm{A}}_{\mathrm{L}}\otimes P^{\mathrm{B}}_{\mathrm{L}}$. That is, the merged code has $s_{\mathrm{A}}+s_{\mathrm{B}}-\Delta g+1$ independent stabilizers and thus, a necessarily non-empty center (but reduced logical subspace).

To complete our analysis of the merged code, we have to verify that $\mathcal{G}^{\mathrm{M}}$ still contains $g_{\mathrm{A}}+g_{\mathrm{B}}$ gauge qubits besides new ones. Therefore, note that there might exist pairs of gauge operators $g,\tilde{g}\in\mathcal{L}_\mathrm{G}^{\mathrm{A}}\times\mathcal{L}_\mathrm{G}^{\mathrm{B}}$ generating a gauge qubit, for which $\{g,G^{\mathrm{M}}_j\}=0$ with $G^{\mathrm{M}}_j\in\mathcal{W}$ $\forall j\in J$ where $J\subseteq\{1,...,\Delta g\}$. While the associated gauge qubit in $\mathcal{L}_\mathrm{G}^{\mathrm{A}}\times\mathcal{L}_\mathrm{G}^{\mathrm{B}}$ was defined by $\langle \tilde{g}, g\rangle$, it is in the merged code redefined to be generated by\footnote{W.l.o.g. we have assumed that $[\tilde{g}, G^{\mathrm{M}}_j]=0$ $\forall j=1,...,\Delta g$. Otherwise, we would simply have to apply the same argument to $\tilde{g}\in\mathcal{L}_\mathrm{G}^{\mathrm{A}}\times\mathcal{L}_\mathrm{G}^{\mathrm{B}}$ with $\{\tilde{g},G^{\mathrm{M}}_j\}=0$ for some $j\in J'\subseteq\{1,...,\Delta g\}$.} $\langle \tilde{g}, g\otimes\prod_{j\in J}S_j\rangle$. Here, $S_j\in\mathcal{L}_\mathrm{G}^{\mathrm{M}}$ is associated with one of the $\Delta g$ new gauge qubits introduced above.

One can see that this code is indeed generated as indicated in Eq.~(\ref{merged}). Hence, the distinction to the separated codes is that some stabilizers now act on gauge qubits which can be reversed by gauge fixing.
Interestingly, applying this formalism to two Bacon-Shor codes \cite{Bacon2006} on regular square lattices $[1,L]^2$, yields a merged, asymmetric Bacon-Shor code \cite{BrooksPreskill2013} on a $[1,2L]\times [1,L]$ lattice.

\section{Subsystem Surface Code Lattice Surgery}\label{appendix:ssc}

Here, we exemplify SLS by means of the subsystem surface code (SSC) introduced in Ref.~\cite{Bravyi-Suchara2013}. Therefore, consider two such codes $\mathcal{G}^{\mathrm{A}}$ and $\mathcal{G}^{\mathrm{B}}$ defined on regular square lattices $[1,L]^2$ with qubits located on vertices. In Fig.~\ref{fig:ssc}~(a), we chose the smallest SSC with distance $d=L=3$ defining a unit cell where the central vertex is unoccupied. The codes are generated by triangle and boundary operators, labeled $G_i^I$ and $S_i^I$, respectively, with $i=1,...4$ and $I=\mathrm{A},\mathrm{B}$ [see Fig.~\ref{fig:ssc}~(a)]. Triangle operators $\{G_i^I\}_{i=1,...,4}$ act on qubits adjacent to a triangular plaquette $p$ as $G^I_i=\prod_{v\in\mathcal{N}(p)}X_v$ for $i=2,3$ and $G^I_i=\prod_{v\in\mathcal{N}(p)}Z_v$ for $i=1,4$. Boundary operators $\{S_i^I\}_{i=1,...,4}$ are weight-two Pauli operators acting on every other pair of boundary qubits as either $X$- or $Z$-type operators for $i=1,4$ and $i=2,3$, respectively. The stabilizer group is defined as
\begin{align}
\mathcal{S}^I=\langle S_1^I,...,S_4^I, G^I_1 G^I_4, G^I_2 G^I_3\rangle.
\end{align}
A single gauge qubit is defined up to irrelevant phases by\footnote{Whenever we identify a group $A$ with a generating set $B$ up to phases, we write $A\propto B$.}
\begin{align}
\mathcal{L}_\mathrm{G}^I\propto\langle G_1^I, G_3^I\rangle.
\end{align}
Logical operators $X^I_{\mathrm{L}}$ and $Z^I_{\mathrm{L}}$ are tensor products of single-qubit Pauli $X$- and $Z$ operators, respectively, connecting two opposite boundaries as shown in Fig.~\ref{fig:ssc}~(a).

Now we initialize a merged code by measuring merging generators $\{G_i^{\mathrm{M}}\}_{i=1,2,3}$ as depicted in Fig.~\ref{fig:ssc}~(b). The merged code is left with $5$ stabilizer qubits,
\begin{align}
\mathcal{S}^{\mathrm{M}}=\langle S_1^{\mathrm{A}}, S_4^{\mathrm{B}}, G_2^{\mathrm{A}} G_3^{\mathrm{A}} S_1^{\mathrm{B}}, S_4^{\mathrm{A}}G_2^{\mathrm{B}}G_3^{\mathrm{B}}, Z^{\mathrm{A}}_{\mathrm{L}}Z^{\mathrm{B}}_{\mathrm{L}}\rangle\label{SSCMstabilizer}
\end{align}
where we only kept stabilizers in $\mathcal{S}^{\mathrm{A}}\times\mathcal{S}^{\mathrm{B}}$ that commute with the merging generators.
Consequently, we can identify $2$ additional gauge qubits, i.e., $4$ in total,
\begin{align}
\mathcal{L}_\mathrm{G}^{\mathrm{M}}\propto\langle G_1^{\mathrm{A}}, G_3^{\mathrm{A}}S^{\mathrm{B}}_1;  G_1^{\mathrm{B}}, G_3^{\mathrm{B}}; G_1^{\mathrm{M}}, S_1^{\mathrm{B}}; G_3^{\mathrm{M}}, S_4^{\mathrm{A}}\rangle.\label{SSCMgauge}
\end{align}
Here we included a semicolon between generators of independent gauge qubits and neglected additional phases. Evidently, this code has the structure predicted in Appendix~\ref{appendix:code} and is indeed a TSSC with distance $3$.

\begin{figure}[]
	%
	(a)\ \includegraphics[width=0.38\textwidth]{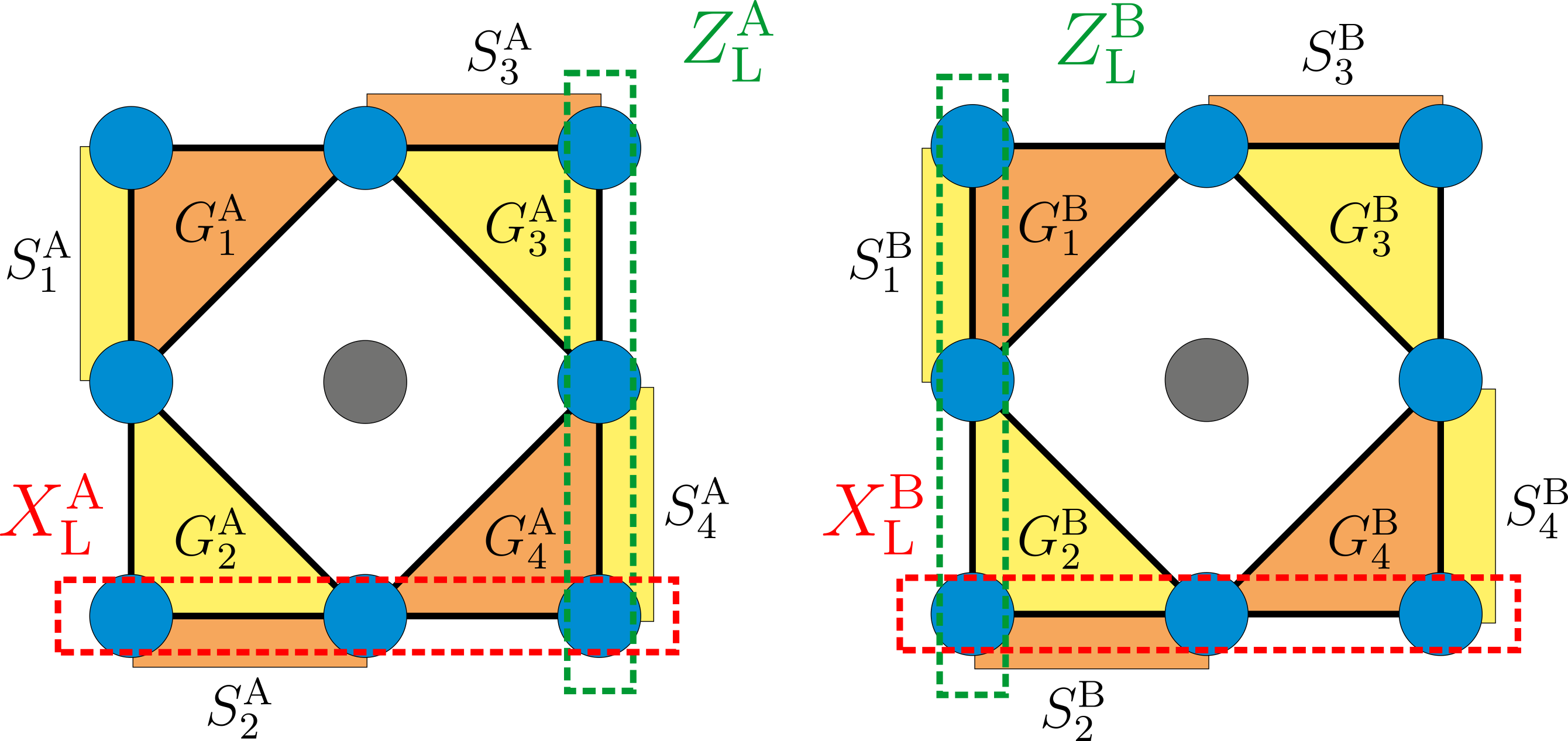}\\[0.4cm]
	\vspace*{-2mm}
	(b)\ \includegraphics[width=0.38\textwidth]{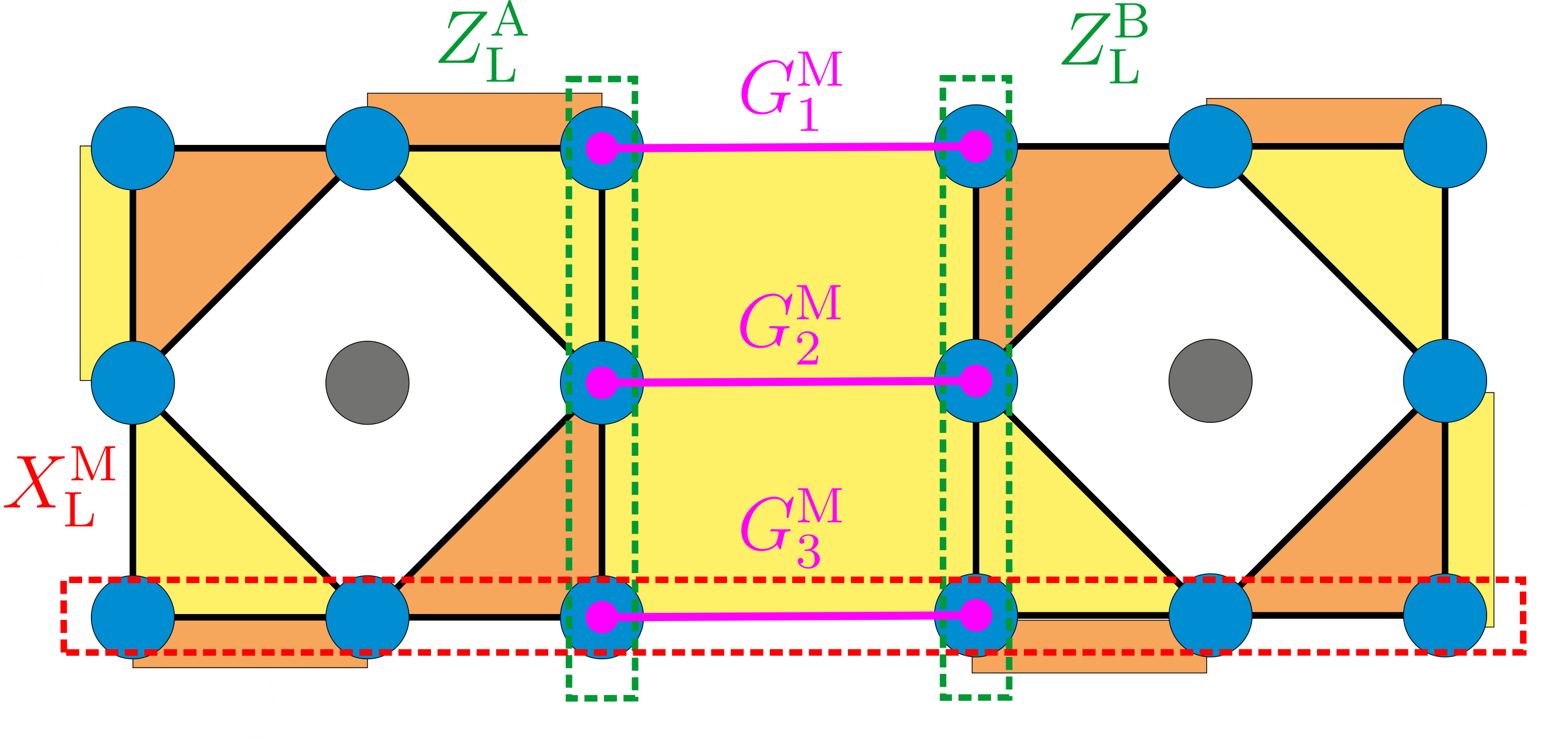}
	\vspace*{-2mm}
	\caption{
	{\bf Subsystem lattice surgery between two subsystem surface codes.} (a) Example of two subsystem surface codes  with distance $3$ labeled by $\mathrm{A}$ and $\mathrm{B}$, respectively. Qubits are located on vertices depicted in blue, while the central gray vertices are unoccupied. Plaquette generators $G_i^I$ and $S_i^I$ ($I=\mathrm{A},\mathrm{B}$) are color coded with $X$-type and $Z$-type operators drawn in yellow and orange, respectively. Representative logical Pauli operators $X^I_{\mathrm{L}}$ and $Z^I_{\mathrm{L}}$ are depicted by dashed lines and are tensor products of single-qubit $X$ and $Z$ operators, respectively. (b) The merged code, labeled by $\mathrm{M}$, with additional merging generators $G_i^{\mathrm{M}}$ colored in fuchsia. Some $X$-type operators in $\mathcal{S}^{\mathrm{M}}$ and $\mathcal{L}_\mathrm{G}^{\mathrm{M}}$ are merged across the boundary in accordance with Eq.~(\ref{SSCMstabilizer}) and~(\ref{SSCMgauge}). Including merging generators to the codes reduces the dimension of the logical subspace by one.
	}
	\label{fig:ssc}
\end{figure}

\section{Beyond 2D Topological Codes}\label{appendix:beyond}

Finally, let us discuss the applicability of SLS to higher-dimensional topological codes as well as to non-topological codes. As we shall argue, all of the above holds true in these cases. In short, this is because the essential feature of SLS is to project onto a joint eigenstate of two logical operators by measuring generators of a merged code. Such a merged code can always be formally defined, irrespective of whether or not the initial codes are topological.

Let us first consider topological codes in more than two dimensions. For instance, 3D topological codes (as opposed to 2D topological codes) can support membrane-like logical operators on their two-dimensional boundary~\cite{BombinMartinDelgado2007, BombinChhajlanyHorodeckiMartinDelgado2013}. Similarly, it can be expected that most topological codes in $D$ dimensions can support logical operators that are associated with $(D-1)$-dimensional extended objects on the boundary~\cite{BombinMartinDelgado2007}. With this assumption, it is straightforward to show that SLS can be applied in any dimension. This leaves the question of whether SLS can also be used to switch between dimensions. This is indeed the case, since, interestingly, 3D topological codes also support string-like logical operators~\cite{PastawskiYoshida2015} which can be used to teleport information from a 2D topological code to a higher-dimensional one via SLS. In fact, at the expense of its two-dimensional layout, a 2D code can even be wrapped along the surface of a 3D code to perform SLS between a string- and a membrane-like logical operator while preserving a 3D notion of locality.

Finally, let us consider the effects of relaxing the constraint of demanding topological features for the quantum error correction codes under scrutiny. For such codes there exists no notion of locality or scalability. Therefore, an underlying lattice of physical qubits cannot be defined in a meaningful way. However, when revising the arguments from Appendix~\ref{appendix:sls} in the spirit of non-topological codes, it turns out that one does not require any topological features to prove that the distance of the merged code is the minimum distance of the initial codes. Neither is it necessary to require locality or scalability to show that the merged code can be gauge-fixed to the original codes. The merged code as specified in Eq.~(\ref{merged}) is always well-defined algebraically even if it has no topological features. That is, at the expense of locality and/or scalability, Lemmas~\ref{lemma1} and~\ref{lemma2} hold and we can use SLS to switch between topological and non-topological codes. Nevertheless, one can expect that scalability in particular is a feature that is desirable for any code. Therefore, let us note that the merged code is scalable if the separate codes are scalable even if it is not topological otherwise. As an example of codes that are scalable but not topological consider the doubled codes of Ref.~\cite{BravyiCross2015} which are also amenable for SLS.





\end{document}